%after remarks of the referee and Joel
%14/10/2010

\documentclass[reqno]{amsart}
\usepackage{amsmath,amssymb,color}
\usepackage{amsfonts, amscd, epsfig, amsmath, amssymb,enumerate}
\usepackage{graphicx}

%%%NEWTHEOREMS
\newtheorem{theorem}{Theorem}[section]
\newtheorem{lemma}[theorem]{Lemma}
\newtheorem{proposition}[theorem]{Proposition}

%%%NUMBERING
\numberwithin{equation}{section}

%%%NEWCOMMANDS
\newcommand{\mc}[1]{{\mathcal #1}}
\newcommand{\mf}[1]{{\mathfrak #1}}
\newcommand{\mb}[1]{{\mathbf #1}}
\newcommand{\bb}[1]{{\mathbb #1}}

\renewcommand{\epsilon}{\varepsilon}

\newcommand{\RR}{\mathbb R}

%%% ID

%%%%%OVERLINE-UNDERLINE IN MATHMODE

% Lettres calligraphiees

% Lettres doublees

\newcommand{\bM}{\mathbf{M}}

%Lettres Greques

\let\ve=\varepsilon

\let\ve=\varepsilon

%%%%%%%%%%%%%%

%%%%%%%%%%%%%%%%

\begin{document}

\author{C\'edric Bernardin}

\address{\noindent
Universit\'e de Lyon and CNRS, UMPA, UMR-CNRS 5669, ENS-Lyon,
46, all\'ee d'Italie, 69364 Lyon Cedex 07 - France.
\newline e-mail: \rm \texttt{Cedric.Bernardin@umpa.ens-lyon.fr}
}

\author{Claudio Landim} 

\address{\noindent IMPA, Estrada Dona Castorina 110, CEP 22460 Rio de
  Janeiro, Brasil and CNRS UMR 6085, Universit\'e de Rouen, Avenue de
  l'Universit\'e, BP.12, Technop\^ole du Madril\-let, F76801
  Saint-\'Etienne-du-Rouvray, France.  \newline e-mail: \rm
  \texttt{landim@impa.br} }

\title[Entropy of nonequilibrium stationary measures ]{ Entropy of stationary nonequilibrium  measures of boundary driven symmetric simple exclusion processes}

\noindent\keywords{Nonequilibrium stationary states, Large deviations,
  quasi-potential, boundary driven symmetric exclusion processes}

\begin{abstract}
  We examine the entropy of stationary nonequilibrium measures
  of boundary driven symmetric simple exclusion processes. In
  contrast with the Gibbs--Shannon entropy \cite{B, DLS2}, the entropy of nonequilibrium stationary states differs from the entropy of local equilibrium states.
\end{abstract}

\maketitle
\thispagestyle{empty}

\section{Introduction}

In the last decade important progress has been accomplished in the
understanding of nonequilibrium stationary states through the study of
stochastic lattice gases (\cite{BDGJL9, D} and references therein).

The simplest nontrivial example of such dynamics is the
one-dimensional simple symmetric exclusion process on the finite
lattice $\{1, 2, \dots , N-1\}$ with particle reservoirs coupled to
the sites $1$ and $N-1$.  In this model the microscopic states are
described by the vector $\eta = (\eta(1), \eta(2), \dots, \eta(N-1))$,
where $\eta(i)=1$ if the site $i$ is occupied and $\eta(i)=0$ if the
site is empty.  Each particle, independently from the others, perform
a nearest-neighbor symmetric random walk with the convention that each
time a particle attempts to jump to a site already occupied the jump
is suppressed. At the boundaries, particles are created and destroyed
in order for the density to be $\alpha$ at the left boundary and
$\beta$ at the right boundary, $0\le \alpha$, $\beta\le 1$.

We denote by $\mu^N_{\alpha, \beta}$ the stationary state of this
system which is a probability measure in the space of configurations
and which can be expressed in terms of a product of matrices
\cite{DLS}. Since the particle number is the only conserved quantity
in the bulk, in the scaling limit $N\to\infty$, $i/N\to x \in [0,1]$,
the system is described by a single density field $\rho(x)$, $x \in
(0, 1)$. The typical density profile $\bar\rho (x)$ is the stationary
solution of a partial differential equation with boundary conditions.
In the context of symmetric exclusion processes,
\begin{equation*}
\bar\rho (x) \;=\; \alpha (1-x) \;+\; \beta x\;.
\end{equation*}

The nonequilibrium stationary states exhibit long range correlations
\cite{S} which are responsible in the large deviations regime for the
non locality of the free energy functional \cite{DLS, BDGJL2}.  More
precisely, if $\gamma$ stands for a density profile different from the
typical one $\bar\rho$, the asymptotic probability of $\gamma$ is
exponentially small and given by
\begin{equation*}
  \mu^N_{\alpha, \beta} [\gamma(\cdot)]  \sim e^{-N V_{\alpha,
      \beta}(\gamma)}\; ,
\end{equation*}
where the so called nonequilibrium free energy $V_{\alpha, \beta}$ is
a non local functional.

Since in equilibrium the probability of such large deviations is
determined by the induced change in the entropy, it is natural to
investigate the entropy of nonequilibrium stationary states.

Denote by $\mf S_N (\nu^N)$ the Gibbs--Shannon entropy of a state
$\nu^N$:
\begin{equation*}
\mf S_N (\nu^N) \;=\; - \, \sum_{\eta}  \nu^N (\eta) \,  
\log \nu^N (\eta)\;,
\end{equation*}
where the sum is carried over all lattice configurations $\eta$.
Recently, Bahadoran \cite{B} proved that for a large class of
stochastic lattice gases the Gibbs--Shannon entropy of nonequilibrium
stationary states has the same asymptotic behavior as the
Gibbs--Shannon entropy of local equilibrium states. In our context of
boundary driven symmetric simple exclusion processes this result can
be stated as follows. Denote by $\nu^N_{\alpha, \beta}$ the product
measure
\begin{equation*}
  \nu^N_{\alpha, \beta}(\eta) \;=\; \prod_{i=1}^{N-1}
  \bar\rho(i/N)^{\eta(i)} [1-\bar\rho(i/N)]^{1-\eta(i)}\; .
\end{equation*}
Thus, at site $i$, independently from the other sites, we place a
particle with probability $\bar\rho(i/N)$ and leave the site empty
with probability $1-\bar\rho(i/N)$. Bahadoran proved that 
\begin{equation*}
\lim_{N\to \infty} \frac 1N \mf S_N (\mu^N_{\alpha, \beta}) \;=\;
\lim_{N\to \infty} \frac 1N \mf S_N (\nu^N_{\alpha, \beta})\;.
\end{equation*}
The long range correlations of the nonequilibrium stationary state is
therefore not captured by the Gibbs--Shannon entropy. 

Derrida, Lebowitz and Speer \cite{DLS2} showed that for the symmetric
simple exclusion process the difference
\begin{equation*}
\mf S_N (\mu^N_{\alpha, \beta}) \;-\;
\mf S_N (\nu^N_{\alpha, \beta})
\end{equation*}
converges as $N\to\infty$, and that the limit depends on the two
points correlation functions. Hence, the long range correlations
appear in the first order correction to the Gibbs--Shannon entropy.

In this article we examine the entropy of the stationary nonequilibrium states $\mu^N_{\alpha, \beta}$.  In the classical Boltzmann--Gibbs theory of equilibrium statistical mechanics \cite{P},
the steady state $\mu^N_\beta (\eta)$ of a microstate $\eta$ is given
by
\begin{equation}
\label{00}
\mu^N_{\beta} (\eta) = \cfrac{1}{Z_{N}(\beta)} \exp(-\beta H(\eta))
\end{equation}
where $\beta$ is the inverse of the temperature, $H(\eta)$ the energy
of $\eta$ and $Z_N (\beta)$ the partition function.  The Boltzmann
entropy is then defined as the limit, when the degrees of freedom $N$
of the system converges to infinity, of $1/N$ times the logarithm of
the number of microstates with a prescribed energy:
\begin{equation*}
S (E) \;=\; \lim_{\delta \to 0} \lim_{N \to \infty} N^{-1} \log 
\Big( \sum_{\eta} {\bf 1} \{|H(\eta) -N E| \le \delta N\}
\Big) \;,
\end{equation*}
where the summation is performed over all configurations $\eta$ and
where $\mb 1\{A\}$ is the indicator of the set $A$. The pressure $P(\beta)$ is defined by 
\begin{equation*}
P(\beta)=\lim_{N \to \infty} \cfrac{1}{N\beta} \log Z_N (\beta)
\end{equation*}
and the Boltzmann entropy is related to the pressure function by 
\begin{equation*}
S(E)= \inf_{\beta>0} \left\{ \beta P (\beta) +\beta E \right\}\;.
\end{equation*}

In view of \eqref{00} and by analogy, we define the energy of a
microstate $\eta$ as $- \log \mu^N_{\alpha, \beta} (\eta)$ and the entropy of the stationary nonequilibrium measure $\mu^N_{\alpha, \beta}$ by
\begin{equation*}
S_{\alpha, \beta}(E) \;=\; \lim_{\delta \to 0} \lim_{N \to \infty} N^{-1} \log 
\Big( \sum_{\eta \in \Omega_N} {\bf 1} \{| N^{-1} \log
\mu^N_{\alpha, \beta} (\eta)  + E| \le \delta \}
\Big) \;.
\end{equation*} 

We propose in \eqref{eq:varS} a variational formula for the entropy function $S_{\alpha, \beta}$ in terms of the nonequilibrium free energy
$V_{\alpha, \beta}$ and the equilibrium Gibbs--Shannon entropy, that we conjecture to
be valid for a large class of boundary driven stochastic lattice
gases. This formula is based on a strong form of local equilibrium,
stated as assumption ({\bf H}).  We present in \eqref{eq:S1} an
explicit formula for the entropy function $S_{\alpha, \beta}$ and we
show in \eqref{S2} that it is strictly concave, being the
Legendre transform of a strictly concave function $P_{\alpha, \beta}$,
identified as the nonequilibrium pressure. This last point is proved
in section \ref{sec03}.

In Section \ref{seczr} we compute the entropy of stationary nonequilibrium measures of
boundary driven zero range processes and in Section \ref{sec:le} we
show that the entropy of the nonequilibrium stationary
states $\mu^N_{\alpha, \beta}$ is different from the entropy
of the local equilibrium states $\nu^N_{\alpha, \beta}$. In Section
\ref{sec04}, we determine the energy band and describe the density
profiles with lowest and largest energy. In Section \ref{sec:isen}, we
examine the isentropic surfaces and in the appendix we show that the
strong form of local equilibrium holds for the symmetric simple
exclusion process by using the ideas of \cite{DLS}.

\section{Stationary nonequilibrium entropy function}
\label{sec:entropy}

Fix an integer $N\geq 1$, $0<\alpha \leq \beta < 1$ and let $\Lambda_N
:=\{1, \dots , N-1\}$. Denote by $\Omega_N:=\{0,1\}^{\Lambda_N}$ the
configuration space and by $\eta$ the elements of $\Omega_N$, so that
$\eta(x)=1$, resp.\ $0$, if site $x$ is occupied, resp.\ empty, for
the configuration $\eta$.  We denote by $\sigma^{x,y}\eta$ the
configuration obtained from $\eta$ by exchanging the occupation
variables $\eta (x)$ and $\eta (y)$, i.e.\
\begin{equation*}
(\sigma^{x,y} \eta) (z) := 
  \begin{cases}
        \eta (y) & \textrm{ if \ } z=x\\
        \eta (x) & \textrm{ if \ } z=y\\
        \eta (z) & \textrm{ if \ } z\neq x,y,
  \end{cases}
\end{equation*}
and by $\sigma^{x}\eta$ the configuration obtained from
$\eta$ by flipping the configuration at $x$, i.e.\
\begin{equation*}
(\sigma^{x} \eta) (z) := 
  \begin{cases}
        1-\eta (x) & \textrm{ if \ } z=x\\
        \eta (z) & \textrm{ if \ } z\neq x.
  \end{cases}
\end{equation*}

The one-dimensional boundary driven symmetric exclusion process is the
Markov process on $\Omega_N$ whose generator $L_N$ can be decomposed
as
\begin{equation*}
L_N \;=\; L_{0,N} \;+\; L_{-,N} \;+\; L_{+,N} \;,
\end{equation*}
where the generators $L_{0,N}$, $L_{-,N}$, $L_{+,N}$ act on functions
$f:\Omega_N\to \bb R$ as
\begin{eqnarray*}
&& (L_{0,N} f)(\eta) \;=\; \sum_{x=1}^{N-2} 
\big[ f(\sigma^{x,x+1} \eta)-f(\eta)\big] \;, \\
&& \quad
(L_{-,N} f)(\eta) \;=\;  
\big\{ \alpha [1-\eta(1)] + (1-\alpha) \eta(1) \big\}
\, \big[ f(\sigma^{1} \eta)-f(\eta)\big] \\
&&  \qquad (L_{+,N} f)(\eta) \;=\;  
\big\{ \beta [1-\eta(N-1)] + (1-\beta) \eta(N-1) \big\}
\big[ f(\sigma^{N-1} \eta)-f(\eta)\big] \;.
\end{eqnarray*}

We denote by $\eta_t$ the Markov process on $\Omega_N$ with generator
$L_N$.  Since the Markov process $\eta_t$ is irreducible, for each
$N\ge 1$, and $0<\alpha \leq \beta<1$ there exists a unique stationary
state denoted by $\mu^N_{\alpha,\beta}$.

The entropy function $S_{\alpha,\beta}: \bb R_+ \to \{- \infty\}
\cup[0, \log 2]$ associated to the nonequilibrium stationary state
$\mu^N_{\alpha,\beta}$ is defined by
\begin{equation*}
S_{\alpha,\beta} (E) \;=\; \lim_{\delta \to 0} \lim_{N \to \infty}
\cfrac{1}{N} \log \sum_{ \eta \in \Omega_N} {\bf 1} \left\{
\left| N^{-1} \log \mu^N_{\alpha, \beta} (\eta) +E \right| \le
        \delta\right\}
\end{equation*} 
whenever the limits exist. To keep notation simple, we sometimes
denote $S_{\alpha,\beta}$ by $S$.

Note that we may include in the sum $\mu^N_{\alpha,\beta}(\eta)$:
\begin{equation}
\label{l2}
S_{\alpha,\beta} (E) \;=\; E\;+\; \lim_{\delta \to 0} \lim_{N \to \infty}
\cfrac{1}{N} \log \mu^N_{\alpha,\beta} \big \{
\left| N^{-1} \log \mu^N_{\alpha, \beta} (\eta) +E \right| \le
        \delta\big\}\; .
\end{equation} 
In particular, 
\begin{equation}
\label{eq:J}
J_{\alpha,\beta} (E) \;=\; E \;-\; S_{\alpha,\beta}(E)
\end{equation}
is the large deviations rate function of the random variables $-
N^{-1} \log \mu^N_{\alpha,\beta} (\eta)$ under the probability measure
$\mu_{\alpha,\beta}^N$. 

At equilibrium $\alpha=\beta$, the stationary state
$\mu^N_{\alpha,\beta}$ is a Bernoulli product measure with density
$\alpha$ and the entropy function is given by
\begin{equation}
\label{f12}
S_{\alpha} (E) \;:=\;S_{\alpha,\alpha} (E) \;=\;
-\, s\Big(-\cfrac{E+\log(1-\alpha)}{\log \alpha - \log(1-\alpha)} 
\, \Big)\;, 
\end{equation}
where 
\begin{equation*}
s(\theta)= \theta \log \theta + (1-\theta) \log(1-\theta)
\end{equation*}
represents the Gibbs--Shannon entropy. This formula is valid for $E$
in the energy band $[E_{-} (\alpha), E_{+} (\alpha)]$ where
\begin{equation*}
E_{-}(\alpha) \;=\; -\log \big\{ \max (\alpha,1-\alpha)\big\}\;, 
\quad E_{+}(\alpha) \;=\; - \log \big\{ \min (\alpha,1-\alpha)
\big\}\; . 
\end{equation*} 
In the case $\alpha=1/2$ the energy band is reduced to the point $\log
2$ and $S_{1/2} (\log 2)=\log 2$. Outside the energy band we have
$S_{\alpha} (E) = -\infty$.

Identity \eqref{f12} can be derived from the large deviations
principle for the random variable $- N^{-1} \log \mu^N_{\alpha,\alpha}
(\eta)$, which in the equilibrium case is an average of i.i.d. random
variables.

Denote by $\langle \,\cdot\,,\,\cdot\, \rangle$ the scalar product in
${L}^2 ([0,1])$. Let ${\mc M}$ be the set of measurable profiles $m:
[0,1] \to [0,1]$ equipped with the topology induced by weak
convergence, namely $m_n \to m$ in ${\mc M}$ if and only $\langle m_n
, G \rangle \to \langle m, G \rangle$ for every continuous function
$G:[0,1] \to \RR$. For every $m \in {\mc M}$ the nonequilibrium
free energy \cite{DLS, BDGJL2} $V_{\alpha,\beta}(m)$ of $m$ is defined by
$$ 
V_{\alpha,\beta}(m) \;=\; \int_{0}^{1} \Big\{ m(x)
\log{\cfrac{m(x)}{F(x)}}+(1-m(x))\log{\cfrac{1-m(x)}{1-F(x)}}
  +\log{\cfrac{F' (x)}{\beta-\alpha}}\Big\} \, dx\;,
$$
where $F \in C^1 ([0,1])$ is the unique increasing solution of the non
linear boundary value problem
\begin{equation}
\label{eq:bvp}
\begin{cases}
F'' = (m-F)\cfrac{(F')^2}{F(1-F)} \;,\\
F(0)=\alpha\; , \;\;F(1)=\beta \;.
\end{cases}
\end{equation}
To keep notation simple we frequently denote $V_{\alpha,\beta}$ by
$V$.

Decompose the set $\Lambda_N$ into $r=\ve^{-1}$ adjacent intervals
$K_1, \ldots,K_{\varepsilon^{-1}}$ of size $\ve N$ and denote by
$\bM=(M_1, \ldots, M_{r})$ the number of particles in
each box.  Let
\begin{equation*}
\nu (M_1, \ldots, M_r) =\sum_{\eta\in \Omega_N} {\bf 1}
\Big \{ \sum_{x \in K_1} \eta (x) = M_1, \ldots,  
\sum_{x \in K_r} \eta (x) = M_r \Big\} \, \mu^N_{\alpha, \beta} (\eta) 
\end{equation*}
be the probability to find $M_j$ particles in the interval $K_j$,
$1\le j\le r$.  Denote by $\mu^N_{\alpha, \beta}(\cdot| \bM)$ the
probability measure $\mu^N_{\alpha, \beta}$ conditioned to have $M_j$
particles in $K_j$, $j=1, \ldots,r$. The set of configurations $\eta$
such that $\sum_{x \in K_j} \eta (x) = M_j$, $1\le j\le r$, is denoted
by $\Omega_N (\bM)$ and its cardinality by $Z_N (\bM)$.  We shall
assume that for every $0<\alpha \leq \beta<1$, 
\begin{equation*}
\tag*{({\bf H})}
\lim_{\ve \to 0} \limsup_{N \to \infty} \sup_{\bM} 
\sup_{\eta \in \Omega_N (\bM)} \cfrac{1}{N} \, \Big|
\log \big\{ Z_{N} (\bM) \, \mu^N_{\alpha, \beta} (\eta|\bM)\big\} 
\,\Big|\;=\; 0\; .
\end{equation*}
We present in the appendix a formal derivation of this hypothesis.

Assumption ({\bf H}) states that the stationary state $\mu^N_{\alpha,
  \beta}$ conditioned on the number of particles on macroscopic
intervals is uniformly close in a logarithmic sense to the uniform
measure as the number of intervals increases. As we shall see, this
alternative formulation of local equilibrium plays a central role in
the investigation of the entropy of stationary nonequilibrium measures. The first main result of
this article provides a variational formula for the entropy function.
We claim that for every $0<\alpha \leq \beta<1$, $E\ge 0$,
\begin{equation}
\label{eq:varS}
S_{\alpha,\beta} (E) = \sup_{m \in {\mc M}} \big\{ \bb S(m) :
V_{\alpha,\beta}(m) + {\bb S}(m) = E \big\}\;,
\end{equation}
where
\begin{equation*}
{\bb S}(m) \;=\; - \int_{0}^{1}  s (m(x))\, dx\;.
\end{equation*}
\medskip

This formula is a straightforward consequence of assumption ({\bf H})
and the large deviations for the nonequilibrium stationary state
$\mu^N_{\alpha, \beta}$. Indeed, we may rewrite $\mu^N_{\alpha, \beta}
(\eta)$ as $\mu^N_{\alpha, \beta}(\eta|\bM) \nu (\bM)$.  Hence, by
definition of the entropy and by assumption ({\bf H}),
\begin{equation*}
S (E) \;=\; \lim_{\delta \to 0} 
\lim_{\ve \to 0} \lim_{N \to \infty}
\cfrac{1}{N} \log \sum_{\eta \in \Omega_{N}} {\bf 1}
\Big \{ \big | N^{-1} \log \nu (\bM) - N^{-1} \log {Z}_N (\bM) 
+ E \big| \le \delta \Big\}\;.
\end{equation*}
The previous sum can be rewritten as
\begin{equation*}
\sum_{\bM} Z_N (\bM) \, {\bf 1} \Big\{ \big| N^{-1} \log \nu (\bM) 
- N^{-1} \log {Z}_N (\bM) + E \big|\le \delta\Big\}\;.
\end{equation*}
Recall that $N^{-1} \log \nu(\bM) \sim -V(m)$ and that
$N^{-1} \log Z_N (\bM) \sim {\bb S}(m)$ where $m(\cdot)$ is the
macroscopic profile associated to $\bM$:
\begin{equation*}
m \;=\; \sum_{i=1}^{r} \rho_i \, {\bf 1}\{[x_i, x_{i+1})\}\;, \quad 
\rho_i = M_i /(N\ve) \;, \quad
K_i \; =\; \{[Nx_{i}], \ldots, [N x_{i+1}] -1\} \; .
\end{equation*}
Since for a fixed $\ve$ the sum over $\bM$ has only a polynomial
number of terms in $N$ and since $Z_{N} (\bM)$ is exponentially large
in $N$, only the term which maximizes $Z_N (\bM)$ contributes.  The
result follows.

Note that by \cite[Theorem 4.1]{BSGJL}, the functional $V_{\alpha,
  \beta} + \bb S$ is continuous in $\mc M$.

\subsection{The nonequilibrium pressure}

Let $A: \RR_* \to \bb R_+$, $P : \RR \to \bb R$ be given by
\begin{equation}
\label{f01}
\begin{split}
& A(\theta)\; =\; A_{\alpha,\beta}(\theta) \;:=\; \int_{\alpha}^{\beta}
\cfrac{dx}{\left[ x^{\theta} +(1-x)^\theta \right]^{1/\theta}}\;,
\quad \theta\not = 0\;,
\\
& \quad P(\theta) \;= \; P_{\alpha,\beta} (\theta) \;:=\;
\theta \, \log \Big( \cfrac{A_{\alpha,\beta} (\theta)}{\beta
  -\alpha}\Big)\; , \quad \theta\not = 0\;,
\end{split}
\end{equation}
$P (0)=-\log 2$.  As we shall see in \eqref{S2}, $P$ is the Legendre
transform of the entropy function $S$ and may thus be identified with
the nonequilibrium pressure. An elementary computation shows that
$\log A$ is strictly increasing on the intervals $(-\infty, 0)$ and
$(0,\infty)$ and that $\lim_{\theta \to \pm 0} \log A(\theta) = \mp
\infty$.  Moreover,
\begin{equation}
\label{f04}
P^{\prime} (\theta) \; =\;  \cfrac{1}{A (\theta)\, \theta^2} 
\int_{(\alpha /(1-\alpha))^\theta}
^{(\beta/ (1-\beta))^\theta} \frac 1{1+x} 
\frac{(1+x)\log (1+x) - x \log x }
{(x^{1/\theta} +1) (x^{\theta -1}+x^{\theta})^{1/\theta}} 
\, dx\;+\;\log \frac{A(\theta)}{\beta -\alpha}
\end{equation}
for $\theta\not =0$, and 
\begin{equation*}
P' (0) \;=\; \log \Big( \cfrac{1}{\beta -\alpha} 
\int_{\alpha}^{\beta} \cfrac{dx}{\sqrt{x (1-x)}} \, \Big) \;.
\end{equation*}

We prove in Section \ref{sec03} the following properties.

\begin{lemma}
\label{s01}
The function $P$ is a $C^2$ strictly concave function. Moreover,
\begin{equation*}
\lim_{\theta \to \pm \infty} \big\{P (\theta) - \theta E_{\mp} \big\}
\;=\; 0\;, \quad
\lim_{\theta \to \pm \infty} \big\{  P (\theta) -\theta
{P}^{\prime} (\theta) \big\} \;=\; 0 \;, 
\end{equation*}
where
\begin{equation*}
\begin{split}
& E_{+} \;=\; E_{+} (\alpha, \beta) 
\;=\; \log \Big( \cfrac{1}{\beta -\alpha} 
\int_{\alpha}^{\beta} \cfrac{dx}{\min \{x, 1-x\}} \, \Big)\; ,
\\
&\quad E_{-} \;=\; E_{-} (\alpha,\beta) \;=\; 
\log\Big(\cfrac{1}{\beta-\alpha} 
\int_{\alpha}^\beta \cfrac{dx}{\max \{x,1-x\}} \, \Big)\; .
\end{split}
\end{equation*}
\end{lemma}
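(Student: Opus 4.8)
I need to establish three things about $P(\theta)$: that it is $C^2$ and strictly concave, and that it satisfies the two asymptotic relations as $\theta\to\pm\infty$. Let me think through each.

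For the concavity and regularity, the function $A(\theta)=\int_\alpha^\beta [x^\theta+(1-x)^\theta]^{-1/\theta}\,dx$ and $P(\theta)=\theta\log(A(\theta)/(\beta-\alpha))$ are given by explicit integrals with smooth integrands (for $\theta\neq 0$), so differentiability away from $0$ is routine. The delicate point is $C^2$ smoothness at $\theta=0$, where the definition of $P(0)=-\log 2$ is given separately. I need to verify that the integral formulas for $P'$ and $P''$ have finite limits as $\theta\to 0$ matching the stated value $P'(0)$, which requires a careful expansion of the integrand near $\theta=0$.

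Let me work out the strict concavity. The plan is to compute $P''(\theta)$ and show it is strictly negative. The formula (1.4) for $P'$ is the starting point. Rather than differentiating that messy expression directly, I expect the cleaner route is to rewrite things via the substitution that appears implicitly in (1.4), namely $x\mapsto (x/(1-x))^\theta$, mapping the integration variable to the interval $[(\alpha/(1-\alpha))^\theta,(\beta/(1-\beta))^\theta]$. Defining an auxiliary function and recognizing $P'$ as essentially a weighted integral, I would aim to express $P''$ as a variance-type quantity — an integral against a probability density of the square of a deviation — which is manifestly nonnegative with the correct sign. Concretely, if one writes $A(\theta)/(\beta-\alpha)$ as the expectation of some quantity under a probability measure on $[\alpha,\beta]$, then $\log A$ and hence $P$ should inherit convexity/concavity from a Hölder-type or Jensen inequality, and strictness follows because the integrand is non-constant in $x$. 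This variance representation is the conceptual heart of the argument and is what I expect to be the main obstacle.

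For the asymptotics, I examine $A(\theta)$ as $\theta\to+\infty$. Since $[x^\theta+(1-x)^\theta]^{1/\theta}\to \max\{x,1-x\}$ pointwise, dominated convergence gives $A(\theta)\to\int_\alpha^\beta dx/\max\{x,1-x\}$, so $P(\theta)-\theta E_-=\theta\log(A(\theta)/(\beta-\alpha))-\theta E_-\to 0$ requires the sharper estimate $\log(A(\theta)/(\beta-\alpha))-E_-=o(1/\theta)$. I would obtain this by writing $[x^\theta+(1-x)^\theta]^{1/\theta}=\max\{x,1-x\}\,[1+(\min/\max)^\theta]^{1/\theta}$ and expanding: the correction $(1/\theta)\log[1+(\min/\max)^\theta]$ is exponentially small in $\theta$ except near $x=1/2$, and a Laplace/dominated-convergence estimate shows the whole thing is $o(1/\theta)$. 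The $\theta\to-\infty$ case is symmetric, yielding $E_+$ (the roles of max and min swap). For the second relation $P(\theta)-\theta P'(\theta)\to 0$, I note this is the $\theta$-intercept of the tangent line to $P$; since $P$ is asymptotically linear with slope $E_\mp$ and intercept tending to $0$ (from the first relation, $P(\theta)\sim\theta E_\mp$ with vanishing affine correction), it follows that $P'(\theta)\to E_\mp$ and $P(\theta)-\theta P'(\theta)\to 0$. I would make this rigorous using the explicit formula (1.4) for $P'$, checking that the integral term there vanishes in the limit, leaving only $\log(A(\theta)/(\beta-\alpha))\to E_\mp$.

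The main obstacle, as noted, is the strict concavity: producing a clean representation of $P''$ with definite sign. I anticipate that the substitution in (1.4) is precisely engineered to expose $P'$ in a form whose derivative is a variance, so the key technical step is to identify the correct probability measure and the correct random variable whose variance equals $-P''$, after which strictness is immediate from non-degeneracy of that random variable on $(\alpha,\beta)$ (using $\alpha<\beta$).
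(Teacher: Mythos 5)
Your handling of the asymptotics is essentially correct and matches the paper: writing $[x^\theta+(1-x)^\theta]^{1/\theta}=\max\{x,1-x\}\,[1+(\min/\max)^\theta]^{1/\theta}$, expanding, and killing the correction by dominated convergence is exactly how the paper obtains $\log[A(\theta)/(\beta-\alpha)]=E_-+o(1/\theta)$; and your soft derivation of the second limit from the first (valid because $P$ is concave, so $P(\theta)-\theta E_\mp\to 0$ forces $\theta\{P'(\theta)-E_\mp\}\to 0$) is a legitimate alternative to the paper's direct computation from \eqref{f04}.

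The genuine gap is in the strict concavity, precisely the step you yourself flagged as the heart of the matter. Your plan is to exhibit $-P''(\theta)$ as a variance under a suitable probability measure, so that nonnegativity is ``manifest'' and strictness ``immediate''. That representation does not exist in the form you expect. The substitution $t=x^\theta/[x^\theta+(1-x)^\theta]$ yields (formula \eqref{eq:PP} of the paper)
\begin{equation*}
P''(\theta)\;=\;\frac{1}{\theta^3}\,\Big[\int_A^B s(t)^2\,\mu_\theta(dt)-\Big(\int_A^B s(t)\,\mu_\theta(dt)\Big)^2\Big]\;-\;\frac{1}{\theta^2}\int_A^B t(1-t)\Big(\log\frac{t}{1-t}\Big)^2\mu_\theta(dt)\;,
\end{equation*}
so a variance does appear, but with the prefactor $\theta^{-3}$. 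For $\theta<0$ both terms are negative and concavity follows at once, which is as far as your argument would carry. For $\theta>0$, however, the variance term enters with the \emph{wrong} sign: $-P''(\theta)$ is a difference of two positive quantities, not a variance, and the entire difficulty of the lemma is the strict inequality \eqref{ff01} asserting that the variance of $s$ under $\mu_\theta$ is dominated by $\theta\int t(1-t)(\log\frac{t}{1-t})^2\,d\mu_\theta$. The paper proves this by rewriting the variance as a double integral of $\{s(t)-s(r)\}^2$, applying Cauchy--Schwarz with the carefully chosen weight $H(u)=\theta^{-1}\{u^{-1/\theta}+(1-u)^{-1/\theta}\}$, and reducing to the bound \eqref{ff02}, whose proof rests on the fact that the total variation of the explicit primitive $K(t)=-\frac{\theta}{\theta-1}\log\{t^{1/\theta}+(1-t)^{1/\theta}\}$ equals $\log 2<1$. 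Nothing in your sketch produces this inequality, and no Jensen/H\"older argument applied to $\log A$ alone will: the needed comparison is between two different functionals of $\mu_\theta$, with a constant that is not soft. As it stands, your proposal proves strict concavity only on $(-\infty,0)$, leaving the half-line $(0,\infty)$ --- and hence the lemma --- unproved.
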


It follows from this lemma that $\lim_{\theta \to \pm \infty} \theta\{
P'(\theta) - E_{\mp}\} =0$, and that for each $E \in (E_{-} ,
E_{+})$, there exists a unique $\theta_E = \theta(\alpha,\beta, E)\in
\bb R$ such that
\begin{equation}
\label{f05}
P^{\prime} (\theta_E) \; =\; E\; .
\end{equation}

Define the functions $\gamma_\pm : \bb R \to \bb R_+$ by
\begin{equation*}
\begin{split}
& \gamma_- (\theta) \;=\; \gamma_- (\alpha, \beta, \theta) \; =\; 
\min \Big\{ \Big( \frac \alpha {1-\alpha} \Big)^\theta \,,\,
\Big( \frac \beta {1-\beta} \Big)^\theta \Big\} \; , \\
&\quad
\gamma_+ (\theta) \;=\; \gamma_+ (\alpha, \beta, \theta) \; =\; 
\max \Big\{ \Big( \frac \alpha {1-\alpha} \Big)^\theta \,,\,
\Big( \frac \beta {1-\beta} \Big)^\theta \Big\} \; ,
\end{split}
\end{equation*}
and let $W_E=W_{\alpha,\beta, E}: [\gamma_- (\theta_E), \gamma_+
(\theta_E)] \to [0,1]$ be the monotone function given by
\begin{equation*}
W_E (x) \;=\; \frac 1{A(\theta_E)\, \theta_E} 
\int_{[\alpha / (1-\alpha)]^{\theta_E}}^{x} 
\cfrac{dt}{ (t^{1/\theta_E} +1) (t^{\theta_E -1} + 
t^{\theta_E})^{1/\theta_E}}\; \cdot
\end{equation*}
Clearly, $W_E ([\alpha/ (1-\alpha)]^{\theta_E}) =0$. On the other
hand, the change of variables $t = [x/ (1-x)]^{\theta_E}$ shows that
$W_E ([\beta/ (1-\beta)]^{\theta_E}) =1$ in view of the definition of
$A(\theta)$.  Let $h_E: [0,1] \to [\gamma_- (\theta_E), \gamma_+
(\theta_E)]$ be the inverse of $W_E$ so that $h_E(0) = (\alpha
/(1-\alpha))^{\theta_E}$, $h_E(1) = (\beta/(1-\beta))^{\theta_E}$.

\subsection{An explicit formula for $S_{\alpha , \beta}$.}

We are now in a position to present an explicit formula for the entropy function $S$. We claim that for every $0<\alpha \leq
\beta<1$, $E_{-}(\alpha , \beta) < E < E_{+} (\alpha , \beta)$,
\begin{equation}
\label{eq:S1}
S_{\alpha , \beta} (E) \;=\; {\bb S} 
\Big( \frac{h_E(x)}{1+h_E(x)} \, \Big) \;.
\end{equation}

Indeed, consider the variational problem (\ref{eq:varS}).  Let $\theta$
be the Lagrange multiplier and let $R(m,\theta)$ be the function
defined by
\begin{equation*}
R(m,\theta) \;=\; {\bb S}(m) \;-\; \theta \, 
\big\{ V(m) + {\bb S}(m) - E \big\}\;.
\end{equation*}
Since by \cite{DLS, BDGJL2} $(\delta V/\delta m) = \log
[m/(1-m)] - \log [F/(1-F)]$, the conditions $(\delta R/\delta m) =
\partial_{\theta} R =0$ imply that
\begin{equation}
\label{eq:mF}
\begin{split}
& \qquad\qquad\qquad\qquad\qquad\qquad
m=\frac{(F/[1-F])^\theta}{1+(F/[1-F])^\theta} \;,\\
& \int_0^1 \Big\{ m(x) \log F(x) + [1-m(x)] \log [1-F (x)] 
-\log\Big( \frac{F' (x)}{\beta-\alpha}\Big)  \Big\} \, dx \;=\; - E\; , 
\end{split}
\end{equation}
where $F$ is the unique increasing solution of the non linear boundary
value problem \eqref{eq:bvp}. We report the first identity in
(\ref{eq:mF}) to (\ref{eq:bvp}) to get that
\begin{equation*}
\cfrac{F''}{F'}=\left\{ 
\cfrac{F^{\theta -1} -(1-F)^{\theta -1}}{(1-F)^{\theta} 
+F^{\theta}}\right\}F'\;.
\end{equation*} 
Since $d/dz \left[ \theta^{-1} \log ((1-z)^\theta + z^\theta)\right] =
\cfrac{z^{\theta -1} -(1-z)^{\theta -1}}{(1-z)^{\theta} +z^{\theta}}$
and $(\log F') '= F'' /F'$ we deduce from the previous equation that
\begin{equation}
\label{f02}
F' \;=\; A \left[ (1-F)^{\theta} + F^{\theta} \right]^{1/\theta}
\end{equation}
for some positive constant $A$ determined by the boundary conditions
satisfied by $F$:
\begin{equation*}
A\;=\; \int_0^1 \frac{F'(x)}
{[ (1-F)^{\theta} + F^{\theta}]^{1/\theta}} \, dx\;.
\end{equation*}
The change of variables $y = F(x)$ shows that $A = A(\theta)$ is given
by \eqref{f01}.

Recall the definition of $\gamma_\pm (\theta)$. Let $g_\theta : [0,1]
\to [\gamma_- (\theta) , \gamma_+ (\theta)]$ be given by $g_\theta =
\left( F/(1-F)\right)^{\theta}$ and observe that
\begin{equation*}
g_\theta' \;=\; A(\theta) \, \theta\, (g_\theta^{1/\theta} +1) \,
(g_\theta^{\theta -1}+g_\theta^{\theta})^{1/\theta}\; .
\end{equation*}
Define $U_\theta : [\gamma_- (\theta) , \gamma_+ (\theta)] \to [0,1]$
by
\begin{equation*}
U_\theta (x) \;=\; \frac 1{A(\theta)\, \theta} 
\int_{(\alpha / (1-\alpha))^{\theta}}^{x} 
\cfrac{dt}{ (t^{1/\theta} +1) (t^{\theta -1} + t^\theta)^{1/\theta}}
\end{equation*}
and remark that $g_\theta = U_\theta^{-1}$. 

In the second equation of (\ref{eq:mF}) replacing $m$ by $g_\theta
/(1+ g_\theta)$ and $F'$ by the right hand side of identity
\eqref{f02}, we obtain that
\begin{equation*}
\cfrac{1}{\theta} \int_{0}^1 \Big\{ 
\cfrac{g_\theta(x)}{1+g_\theta(x)} \log g_\theta(x) -
\log (1+g_\theta(x))\Big\}\, dx  
\;-\; \log \frac{A(\theta)}{\beta -\alpha} \;=\; - E  \; .
\end{equation*}
Performing the change of variables $y = g_\theta(x)$, we get that
\begin{equation}
\label{eq:theta2b}
\cfrac{1}{A (\theta)\, \theta^2} \int_{(\alpha /(1-\alpha))^\theta}
^{(\beta/ (1-\beta))^\theta} \frac 1{1+x} 
\frac{x \log x - (1+x)\log (1+x)}
{(x^{1/\theta} +1) (x^{\theta -1}+x^{\theta})^{1/\theta}} \, dx
\;- \;\log \frac {A(\theta)}{\beta -\alpha}  \;=\;- E \;.
\end{equation} 
In view of the explicit expression for $P'$, we may rewrite the
previous identity as $P'(\theta) = E$. Therefore, by
\eqref{f05}, $\theta=\theta_E$, and hence $U_\theta = W_E$,
$g_\theta = h_E$. Moreover, in view of \eqref{eq:mF}, the density
profile $m$ which solves the variational problem \eqref{eq:varS} is
$m= h_E/[1+h_E]$. This proves \eqref{eq:S1}. \medskip

\subsection{A variational formula for $S_{\alpha , \beta}$.}

We conclude this section showing that $P_{\alpha , \beta}$ is the
Legendre transform of $S_{\alpha , \beta}$ and can therefore be
identified with the nonequilibrium pressure. 

For every $0<\alpha \leq \beta<1$, $E\ge 0$,
\begin{equation}
\label{S2}
S (E) \;=\; \inf_{\theta \in \RR} 
\big \{ \theta E -P (\theta) \big \}\;.
\end{equation}
If $E$ belongs to the energy band $(E_{-} , E_{+})$ the infimum is
attained at $\theta_E$ given by \eqref{f05} and
\begin{equation*}
S (E) \;=\; \theta_E E \;-\; P (\theta_E)\;.
\end{equation*}
Moreover, $S(E_\pm)=0$ and $S (E) = -\infty$ if $E \notin [E_{-} ,
E_{+} ]$.

By abuse of notation we shall call $S$ the Legendre transform of $P$.
Usually the Legendre transform is defined as a supremum and involves
convex functions. However, by taking a minus sign we may transform
convex functions into concave functions and supremums into infimums.

The proof of \eqref{S2} is simple. In section \ref{sec04} we show that
$S(E)=-\infty$ outside $[E_- , E_+]$ and that $S(E_\pm)=0$.  On the
other hand, by Lemma \ref{s01}, $\theta E -P (\theta)$ is a monotone
non-decreasing function for $E \ge E_+$. Hence, for $E \ge E_+$,
$\inf_{\theta \in \RR} \{ \theta E -P (\theta) \} = \lim_{\theta\to
  -\infty} \{ \theta E -P (\theta) \}$. By Lemma \ref{s01} again,
$\{\theta E -P (\theta)\}$ converges to $-\infty$, $0$ for $E>E_+$,
$E=E_+$, respectively. Therefore, by the first observation of the
proof, $S (E) =\inf \{ \theta E - P (\theta)\}$ for $E\ge E_+$.  The
case $E \le E_{-}$ is analogous.

By Lemma \ref{s01}, $\lim_{\theta \to \pm \infty} \{\theta E
-P(\theta)\} =+\infty$ for $E \in (E_- , E_+)$ and the function
$\theta \to \theta E -P(\theta)$ is strictly convex on $\RR$.  Hence,
$\inf \{ \theta E - P (\theta)\} = \theta_E\, E - P (\theta_E)$, where
$\theta_E$ solves \eqref{f05}. We may rewrite this expression as
$\theta_E\, P'(\theta_E) - P(\theta_E)$.  In view of \eqref{f04}, to
conclude the proof of \eqref{S2} it remains to show that the first
term on the right hand side of \eqref{f04} multiplied by $\theta$ and
computed at $\theta = \theta_E$ coincides with (\ref{eq:S1}). This can
be shown by performing the change of variables $u=h_{E} (x)$ in
\eqref{eq:S1} and recalling that $W_E$ is the inverse of $h_E$.

It follows from \eqref{S2} that $S$ is concave and that $P$, the
Legendre transform of the entropy, can be identified with the
pressure.

The equilibrium case can be recovered by letting $\alpha \to
\beta$. In this case,
\begin{equation*}
\begin{split}
& \theta(\alpha,\alpha, E) \;=\; - \, \cfrac{1}
{\log [\alpha/(1-\alpha)]} \, \cfrac{E+\log(1-\alpha)}
{E+\log(\alpha)}\, \; , \\
& \quad h_E (x) \;=\; - \, \cfrac{E+\log(1-\alpha)}
{E+\log(\alpha)}\;\cdot
\end{split}
\end{equation*} 

\section{Boundary driven zero range processes}
\label{seczr}

We compute in this section the entropy of stationary nonequilibrium measures of boundary driven
zero range processes. The model is described by a positive integer
variable $\eta(x)$ representing the number of particles at site $x\in
\Lambda_N$.  The state space $\bb N^{\Lambda_N}$ is denoted
$\Omega_N$. At exponential times one particle jumps with rate
$g(\eta(x))$ to one of the nearest--neighbor sites.  The function
$g:\bb N\to \bb R_+$ is increasing and $g(0)=0$.  We assume that the
system interacts with particle reservoirs at the boundary of
$\Lambda_N$ whose activity at the right is $\varphi_+>0$ and at the
left is $\varphi_->0$.  The microscopic dynamics is defined by the
generator
\begin{equation*}
\mc L_N = \mc L_{0,N} + \mc L_{-,N} + \mc L_{+,N}
\end{equation*}
where
\begin{equation*}
\begin{split}
& \mc L_{0,N} f(\eta)  = \sum_{x=1}^{N-2}
\Big\{ g(\eta(x))\, [ f(T^{x,x+1}\eta) -f (\eta)]
\;+\; g(\eta(x+1)) \, [ f(T^{x+1,x}\eta) -f (\eta)] \Big\} \\
& \quad \mc L_{-,N} f(\eta) \;=\;  
g(\eta(1)) \, [ f(S^-_1 \eta) -f (\eta)]
\;+\; \varphi_- \, [f(S^+_1\eta) - f(\eta)]   \;, \\
& \qquad \mc L_{+,N} f(\eta) \;=\; 
g(\eta(N-1)) \, [ f(S^-_{N-1} \eta) - f (\eta)]
\;+\; \varphi_+ \, [f(S^+_{N-1}\eta) - f(\eta)] \;,
\end{split}
\end{equation*}
in which 
\begin{equation*}
(T^{x,y}\eta)(z) = \left\{
\begin{array}{ccl}
\eta(z) &\hbox{if}& z\neq x,y \\
\eta(z) -1 &\hbox{if}& z=x \\
\eta(z) +1  &\hbox{if}& z=y 
\end{array}
\right.
\end{equation*}
is the configuration obtained from $\eta$ when a
particle jumps from $x$ to $y$, and
\begin{equation*}
(S^\pm_x \eta)(z) = 
\left\{
\begin{array}{ccl}
\eta(z) &\hbox{if}& z\neq x \\
\eta (z) \pm 1 &\hbox{if}& z=x 
\end{array}
\right.
\end{equation*}
is the configuration where we added (resp. subtracted) one particle at
$x$. Note that, since $g(0)=0$, the number of particles cannot become
negative. 

The invariant measures of the boundary driven zero range processes can
be computed explicitly. Let $\varphi_N:\Lambda_N \to \bb R_+$ be the
linear interpolation between $\varphi_-$ and $\varphi_+$: 
\begin{equation}
\label{zr02}
\varphi_N(x) \;=\; \Big(1- \frac xN \Big) \, \varphi_- 
\;+\; \frac xN\, \varphi_+\;.
\end{equation}
The invariant measure $m^N_{\varphi_-, \varphi_+}$ is the product
measure whose  marginals are given by
\begin{equation*}
m^N_{\varphi_-, \varphi_+} \{\eta : \eta(x) = k\}
\; = \;\frac {1}{Z(\varphi_N(x))} \; 
{\frac {\varphi_N(x)^k}{g(1)\cdots g(k)}} \;, \quad k\ge 0\;,
\end{equation*}
where $Z(\varphi) = 1 + \sum_{k\ge 1} \varphi^k / [g(1) \cdots g(k)]$
is the normalization constant. 

Denote by $R:\bb R_+\to \bb R_+$ the density of particle under the
stationary state with activity equal to $\varphi$ on both boundaries:
\begin{equation*}
R(\varphi) \;=\; E_{m^N_{\varphi, \varphi}} [\eta(x)] \;=\;
\frac 1{Z(\varphi)} \sum_{k\ge 1} k\, \frac{\varphi^k}{g(1) \cdots
  g(k)}\; =\; \frac{\varphi Z'(\varphi)}{Z(\varphi)} \; ,
\end{equation*}
and by $\Phi:\bb R_+\to \bb R_+$ the inverse of $R$:
\begin{equation*}
  \Phi \;=\; R^{-1}\;.
\end{equation*}

Under the stationary state, the typical density profile
$\bar\rho:[0,1]\to \bb R_+$ is the unique solution of the elliptic
equation
\begin{equation*}
\left\{
\begin{aligned}
&  \Delta \Phi(\bar\rho) = 0 \;,\\
& \bar\rho(0) = R(\varphi_-)\; , \; \bar\rho(1) = R(\varphi_+)\;,
\end{aligned}
\right.
\end{equation*}
where $\Delta$ stands for the Laplacian. As $N\uparrow\infty$, the
activity profile $\varphi_N$ introduced in \eqref{zr02} converges to
$\Phi(\bar\rho)$:
\begin{equation*}
\lim_{N\to\infty} \hat\varphi_N \;=\; \Phi(\bar\rho)\;,
\end{equation*}
where $\hat\varphi_N :[0,1]\to \bb R_+$ is the function defined by
$\hat\varphi_N (0) =\varphi_-$, $\hat\varphi_N (x/N) =\varphi_N(x)$,
$x\in\Lambda_N$, $\hat\varphi_N (1) =\varphi_+$ and extended to the
interval $[0,1]$ by linear interpolation.

The weight of a configuration $\eta$ under the stationary state
$m^N_{\varphi_-, \varphi_+}$ is given by
\begin{equation*}
m^N_{\varphi_-, \varphi_+} (\eta) \;=\;  \exp  
\sum_{x=1}^{N-1} \Big\{  \eta(x) \log \varphi_N(x) \,-\,
\log [g(1)\cdots g(\eta(x))] \,-\, \log Z(\varphi_N(x))  \Big\}\, .
\end{equation*}
In the special case where $g(k)= \mb 1 \{k\ge 1\}$, the weight
$m^N_{\varphi_-, \varphi_+} (\eta)$ is a function of the empirical
density. Hence, if in this case we define for a profile
$\rho:[0,1]\to\bb R_+$,
\begin{equation}
\label{zr03}
\mc H_{\varphi_-, \varphi_+} (\rho) \;:=\; 
\int_0^1 \rho(x) \log \Phi(\bar\rho(x)) \;-\; 
\log Z(\Phi(\bar\rho(x))) \; dx\;,
\end{equation}
we have that
\begin{equation*}
m^N_{\varphi_-, \varphi_+} (\rho) \;\sim\; e^{N \mc H_{\varphi_-,
    \varphi_+} (\rho)} \;.
\end{equation*}

In general, the weight of a configuration is not a function of the
empirical density but a function of the field associated to the
variables $\xi(x) = \eta(x) - \log [g(1)\cdots g(\eta(x))]/\log
\varphi_N(x)$.

The nonequilibrium free energy functional is easy to compute in the
context of zero range boundary driven systems since the stationary
state is a product measure. A simple computation shows that
\begin{equation*}
V_{\varphi_-, \varphi_+} (\rho) \;=\; \int_0^1 \rho(x) \log 
\frac{\Phi(\rho(x))}{\Phi(\bar\rho(x))} \;-\; \log
\frac{Z(\Phi(\rho(x)))}{Z(\Phi(\bar\rho(x)))} \, dx\; .
\end{equation*}

To present an explicit formula for the entropy function in this
context, we need to introduce some notation borrowed from the theory
of large deviations of i.i.d. random variables.  Fix $\varphi >0$,
let $M:\bb R\to \bb R_+$ be given by
\begin{equation*}
M_\varphi(a) \;=\; \frac 1{Z(\varphi)}
\sum_{k\ge 0}  \frac{\varphi^k}{g(1) \cdots g(k)} e^{a F_\varphi(k)}\; ,
\end{equation*}
where $F_\varphi(k) = k - [\log \varphi]^{-1} \log [g(1) \cdots
g(k)]$, and let $R_\varphi(a) = M_\varphi'(a)/M_\varphi(a)$. The large
deviations rate function $I_\varphi : \bb R \to \bb R_+$ for the mean
of the i.i.d. random variables $\xi(x) = \eta(x) - [\log \varphi]^{-1}
\log [g(1) \cdots g(\eta(x))]$, $1\le x\le N-1$, distributed according
to $m^N_{\varphi , \varphi}$ is given by
\begin{equation*}
I_\varphi (x) \;=\; x R_\varphi^{-1}(x) - \log M_\varphi(R_\varphi^{-1}(x))\;.
\end{equation*}

In the particular case where $g(k) = \mb 1 \{k\ge 1\}$, we get that
$Z(\varphi)=(1-\varphi)^{-1}$, $R(\varphi) = \varphi /(1-\varphi)$,
$\Phi(\rho) = \rho/(1+\rho)$, $F_\varphi(k) = k$, $\xi (x) = \eta(x)$,
$M_\varphi(a) = (1-\varphi)/(1- \varphi e^a)$, $R_\varphi(a) = \varphi
e^{a}/[1-\varphi e^{a}]$ so that
\begin{equation}
\label{zr01}
\begin{split}
I_\varphi (x) \; &=\; x \log \frac x \varphi 
- (1+x) \log (1+x) - \log (1-x) \\
&=\; x \log \frac{\Phi(x)}{\varphi} - \log
\frac{Z(\Phi(x))}{Z(\varphi)}\;\cdot
\end{split}
\end{equation}
We emphasize that formulas \eqref{zr03} and \eqref{zr01} have been
deduced only in the case $g(k) = \mb 1\{k\ge 1\}$, and may not hold in
general.

For each $0<\varphi_- < \varphi_+$, define the entropy function
$S_{\varphi_-, \varphi_+} : \bb R \to \bb R$ by
\begin{equation*}
S_{\varphi_-, \varphi_+} (E) \;=\; \lim_{\delta \to 0} \lim_{N \to \infty}
\cfrac{1}{N} \log \sum_{ \eta \in \Omega_N} {\bf 1} \left\{
\left| N^{-1} \log m^N_{\varphi_-, \varphi_+} (\eta) +E \right| \le
        \delta\right\}
\end{equation*} 
whenever the limits exist. We may introduce in the sum
$m^N_{\varphi_-, \varphi_+} (\eta)$ to get that the entropy function
is equal to
\begin{equation*}
E \;+ \; \lim_{\delta \to 0} \lim_{N \to \infty}
\cfrac{1}{N} \log  m^N_{\varphi_-, \varphi_+} \left\{
\left| N^{-1} \log m^N_{\varphi_-, \varphi_+} (\eta) +E \right| \le
        \delta\right\}
\end{equation*}
Since $\log m^N_{\varphi_-, \varphi_+} (\eta)$ may be expressed in
terms of the variables $\{\xi(x) : 1\le x\le N-1\}$, which are
independent under $m^N_{\varphi_-, \varphi_+}$, the large deviations
principle gives that
\begin{equation*}
S_{\varphi_-, \varphi_+} (E) \;-\; E \;=\; -\, \inf_\lambda  \int_0^1 
I_{\Phi(\bar\rho(x))} (\lambda(x)) \, dx  \;, 
\end{equation*}
where the infimum is carried over all profiles $\lambda :[0,1]\to\bb
R$ such that
\begin{equation*}
\int_0^1 \lambda(x) \log \Phi(\bar\rho(x)) - \log Z(\Phi(\bar\rho(x))) 
\, dx \;=\;  -E
\end{equation*}

In view of \eqref{zr01}, \eqref{zr03}, in the case where $g(k)= \mb
1\{k\ge 1\}$, the entropy function becomes
\begin{equation}
\label{l1}
S_{\varphi_-, \varphi_+} (E) \;-\; E \;=\; -\, \inf_\rho  \int_0^1 
\rho(x) \log \frac{\Phi(\rho(x))}{\Phi(\bar\rho(x))} \;-\; \log
\frac{Z(\Phi(\rho(x)))}{Z(\Phi(\bar\rho(x)))} \, dx \;, 
\end{equation}
where the infimum is carried over all density profiles $\rho:
[0,1]\to\bb R_+$ such that $\mc H_{\varphi_-, \varphi_+} (\rho)= - E$.
Therefore, in the case $g(k)= \mb 1\{k\ge 1\}$, where an explicit
formula is available, up to a linear term, $S_{\varphi_-, \varphi_+}
(E)$ is obtained by minimizing the free energy functional
$V_{\varphi_-, \varphi_+}$ over all density profiles $\rho$ with
energy $\mc H_{\varphi_-, \varphi_+} (\rho)$ equal to $- E$.

Finally, if we define $\widehat S_{\varphi_-, \varphi_+} : \bb R \to\bb R$ by
\begin{equation*}
\widehat S(E) \;=\; -\, \inf \Big\{ \int_0^1 \rho \log
\frac{\Phi(\rho)}{\Phi(\bar\rho)} - \log \frac{Z(\Phi(\rho))}
{Z(\Phi(\bar\rho))} \, dx : \int_0^1 [\rho - \bar\rho] \log 
\Phi(\bar\rho) \, dx = - E \Big\}\;,
\end{equation*}
we obtain that
\begin{equation*}
S_{\varphi_-, \varphi_+}(E) \;=\;  E \;+\; \widehat S_{\varphi_-,
  \varphi_+} \Big( E + \int_0^1 \Big \{\bar \rho \log
\Phi(\bar\rho) - \log Z(\Phi(\bar\rho))\Big\} \, dx \Big)\;.
\end{equation*}
Note that $\widehat S(E)\le 0$ and $\widehat S(0)= 0$. As above, we
stress that the identity \eqref{l1} and all formulas thereafter were
derived in the case $g(k) = \mb 1\{k\ge 1\}$.

\section{The nonequilibrium pressure}
\label{sec03}

We prove in this section Lemma \ref{s01}. Recall the definition of the
function $P$ introduced in \eqref{f01}. We first prove that
$P$ is strictly concave. A long and tedious computation concluded
with the change of variables $t = x^\theta/[x^\theta + (1-x)^\theta]$
shows that for $\theta\not = 0$,
\begin{equation}
\label{eq:PP}
\begin{split}
P''(\theta) \; &=\; \frac 1{\theta^3} 
\int_A^B s(t)^2 \, \mu_\theta(dt) \; -\; \frac 1{\theta^3} 
\Big( \int_A^B s(t) \, \mu_\theta(dt) \Big)^2 \\
& - \; \frac 1{\theta^2} 
\, \int_A^B t(1-t) \, \Big( \log \frac t{1-t} \Big)^2 
\, \mu_\theta(dt)\; ,  
\end{split}
\end{equation}
where $A= \alpha^\theta/[\alpha^\theta + (1-\alpha)^\theta]$, $B=
\beta^\theta/[\beta^\theta + (1-\beta)^\theta]$,
\begin{equation*}
\mu_\theta(dt) \;:=\; \frac{1}{Z(\theta)} m_\theta(t) \, dt \;,
\quad m_\theta(t) \;:=\; \frac 1{t(1-t)}
\, \frac 1{\frac 1{t^{1/\theta}} + \frac 1{(1-t)^{1/\theta}} }\; ,
\end{equation*}
and $Z(\theta)$ is a normalizing constant which makes $\mu_\theta$ a
probability measure on $[A,B]$.

By Schwarz inequality, the first line of the expression of $P''$
without $\theta^{-3}$ is positive. Therefore, $P$ is strictly
concave on the interval $(-\infty, 0)$. The strict concavity on the
interval $(0,\infty)$ follows from the claim that for all $\theta>0$,
$0<\alpha < \beta<1$,
\begin{equation}
\label{ff01}
\int_A^B s(t)^2 \, \mu_\theta(dt) \;-\;
\Big( \int_A^B s(t) \, \mu_\theta(dt) \Big)^2 \;<\;
\theta\, \int_A^B t(1-t) \, \Big( \log \frac t{1-t} \Big)^2 
\, \mu_\theta(dt)\; ,  
\end{equation}

It is enough to prove that
\begin{equation*}
\int \int_{A\le r\le t \le B} \{ s(t) - s(r) \}^2 \, 
\mu_\theta(dr) \mu_\theta(dt) \;<\; \theta\, 
\int_A^B t(1-t) \, \Big( \log \frac t{1-t} \Big)^2 \, \mu_\theta(dt)\;.
\end{equation*}
Let $H(u) = \theta^{-1} \{u^{-1/\theta} + (1-u)^{-1/\theta}\}$. Assume
that $\theta \not =1$ and denote by $R$ the primitive of $H$ given by
$R(u)=(\theta-1)^{-1} \{ u^{1-1/\theta} - (1-u)^{1-1/\theta}\}$.
Hence, by Schwarz inequality, the left hand side of the previous
inequality is bounded above by
\begin{equation*}
\begin{split}
& \int \int_{A\le r\le t \le B} \Big( \int_r^t \{ s'(u) \}^2 \, H(u)^{-1}
du\Big) \, \Big( \int_r^t \, H(v) dv \Big)\, 
\mu_\theta(dr) \mu_\theta(dt) \\
& \quad =\;
\frac 1{Z^2} \int_A^B du\, \{ s'(u) \}^2 \, H(u)^{-1}
\int_A^u dr \int_u^B dt \, m_\theta(r) \, m_\theta(t) 
\, \{R(t) - R(r) \}\;. 
\end{split}
\end{equation*}
Since $s'(r) = \log[r/(1-r)]$, to conclude the proof of Claim
\eqref{ff01} for $\theta\not =1$, it remains to show that
\begin{equation}
\label{ff02}
\frac 1{Z} \int_A^u dr \int_u^B dt \, m_\theta(r) \, m_\theta(t) 
\, \{R(t) - R(r) \} \; < \; 1
\end{equation}
for all $\theta>0$, $\theta \not = 1$ and $0<A \le u\le B<1$.

The left hand side of the previous inequality can be written as
\begin{equation*}
\int_u^B m_\theta(t) \, R(t)\, dt \; \frac 1{Z} \int_A^u m_\theta(r)
\, dr \;-\; \int_A^u m_\theta(r) \, R(r) \, dr \; \frac 1{Z}
\int_u^B m_\theta(t) \, dt\;.
\end{equation*}
We need to show that this expression is strictly bounded above by $1$
for $A\le u\le B$. Let $K$ be a primitive of $m_\theta \, R$ and
rewrite the previous expression as
\begin{equation*}
\begin{split}
J_{A,B}(u) \; :& =\; [ K(B) - K(u) ]\, M(u)  \;-\; [K(u) - K(A)] \, 
[1-M(u)] \\
& =\; K(B) \, M(u) \;+\; K(A)\, [1-M(u)] \;-\; K(u)\; ,
\end{split}
\end{equation*}
where $M(u) = Z^{-1} \int_{[A,u]} m_\theta(r) \, dr$. This expression
represents the difference between the convex combination of $K(A)$ and
$K(B)$, with weights $M(u)$, $1-M(u)$, and $K(u)$. For $A\le u\le B$,
this difference is clearly absolutely bounded by the variation of $K$
on the interval $[A,B]$:
\begin{equation*}
\sup_{A\le u\le B} \big| J_{A,B}(u) \big| \;\le\; 
\sup_{A\le v\le B} K(v) \;-\; \inf_{A\le v\le B} K(v)\;.
\end{equation*}
Maximizing over $0\le A \le B\le 1$, we get that
\begin{equation*}
\sup_{0\le A\le u\le B\le 1}  \big| J_{A,B}(u) \big|
\;\le\; \sup_{0\le v\le 1} K(v) \;-\; \inf_{0\le v\le 1} K(v)\; .
\end{equation*}

A simple computation shows that
\begin{equation*}
K(t)\;=\; - \, \frac{\theta}{\theta-1} \, \log \{ t^{1/\theta} +
(1-t)^{1/\theta} \} \; .
\end{equation*}
In particular, for $\theta \not =1$, $K(0)=K(1)=0$, $K$ is symmetric
around $1/2$, $K'(1/2) =0$, $K$ decreases on the interval $[0,1/2]$
and increases on the interval $[1/2,1]$. Its total variation on the
interval $[0,1]$ is therefore equal to $-K(1/2) = [\theta/(\theta-1)]
\log 2^{1-1/\theta} = \log 2 < 1$. This proves \eqref{ff02} and
therefore Claim \eqref{ff01} for $\theta\not =1$. The proof for
$\theta=1$ is identical, the only difference being the explicit
expression for the primitives.

The behavior of $P$ in a neighborhood of $0$ is obtained through
a simple Taylor expansion of the integrand. We have
\begin{equation*}
P (\theta) \; =\; -\log 2 \;+\; \theta \log I_1
\;-\; {\theta}^2 \, \cfrac{I_2}{I_1} \; +\; O (\theta^3) \; ,
\end{equation*}
where 
\begin{equation*}
I_1 \;=\; \cfrac{1}{\beta -\alpha} \int_\alpha^\beta
\cfrac{dx}{\sqrt{x(1-x)}} 
\end{equation*}
and 
\begin{equation*}
I_2 \; =\; \cfrac{1}{2(\beta -\alpha)} \int_\alpha^\beta 
\cfrac{[\log x]^2 + [\log(1-x)]^2} {\sqrt{x(1-x)}} \, dx\; .
\end{equation*}
This completes the proof of the strict concavity of $P$ on
${\mathbb R}$.

We now turn to the claim that
\begin{equation*}
\lim_{\theta \to \pm \infty} \big\{ P  (\theta) 
- \theta E_{\mp} \big\} \;=\; 0\;.  
\end{equation*}
We consider the limit $\theta\uparrow\infty$, the other one being
similar. By definition of $P$, we have to prove that
\begin{equation}
\label{f03}
\lim_{\theta \to \infty} \theta \big\{ \log \Big( \frac{A
  (\theta)}{\beta - \alpha} \Big) - E_{-} \big\} \;=\; 0 \;.  
\end{equation}
A preliminary computation shows that $\lim_{\theta \to \infty} \log [
A (\theta)/(\beta - \alpha)] = E_-$.

The proof of \eqref{f03} depends on the positions of $\alpha$ and
$\beta$ with respect to $1/2$, the most difficult case being when
$0<\alpha \le 1/2 \le \beta <1$. Write $A (\theta)$ as
\begin{equation*}
\int_{\alpha}^{1/2} \cfrac{dx}{1-x} \exp\Big\{ -\cfrac{1}{\theta}
\log \Big[ 1+ \Big( \frac x{1-x} \Big)^\theta \Big] \Big\} \;+\; 
\int_{1/2}^{\beta} \cfrac{dx}{x} \exp\Big\{ -\cfrac{1}{\theta}
\log \Big[ 1+ \Big( \frac {1-x}x \Big)^\theta \Big] \Big\}\;.
\end{equation*}
We concentrate on the first integral. Since $|e^q -1 -q | \le
(q^2/2) e^{|q|}$, $q\in \RR$, the first integral is equal to
\begin{equation*}
\int_{\alpha}^{1/2} \cfrac{dx}{1-x} \;-\; 
\frac 1\theta \int_{\alpha}^{1/2} \cfrac{dx}{1-x} 
\log \Big[ 1+ \Big( \frac x{1-x} \Big)^\theta \Big] \;+\; 
\frac 1{\theta^2} {\varepsilon}(\theta)\;,
\end{equation*}
where ${\varepsilon}(\theta)$ is a remainder absolutely bounded by
$[\log 2]^2$ for $\theta>1$. The second integral in this expression
vanishes as $\theta\uparrow\infty$ by the dominated convergence
theorem. Hence, $\log [A (\theta)/(\beta - \alpha)] = E_- +
o(\theta^{-1})$, where $\theta o(\theta^{-1})$ vanishes as
$\theta\uparrow\infty$. This proves \eqref{f03}.

We finally consider the last statement of the lemma.  By \eqref{f04}
and by the change of variables $x=u^{\theta}$,
$P(\theta) - \theta P^{\prime} (\theta)$ is equal to
\begin{equation*}
\begin{split}
& \cfrac{1}{\theta A(\theta)}  \int_{(\alpha /(1-\alpha))^\theta}
^{(\beta/ (1-\beta))^\theta} \frac 1{1+x}  \frac{x \log x - (1+x)\log (1+x)}
{(x^{1/\theta} +1) (x^{\theta -1}+x^{\theta})^{1/\theta}} \, dx \\
& \quad = \cfrac{1}{A(\theta)}
\int_{\alpha/{1-\alpha}}^{\beta/{1-\beta}} \cfrac{du}{(1+u)} \, 
\frac{ u^{\theta} \log u^{\theta} - (1+u^{\theta})\log(1+u^{\theta})
  } {(1+u^{\theta})^{1 + [1/\theta]}} \;\cdot
\end{split}
\end{equation*}
We examine the case $\theta\uparrow\infty$, $0< \alpha \le 1/2 \le
\beta$, the other ones being simpler. Since $A(\theta)$ converges to a
constant as $\theta\uparrow\infty$, only the integral has to be
estimated. By the dominated convergence theorem,
\begin{equation*}
\lim_{\theta\to \infty} \int_{\alpha/{1-\alpha}}^{1} \cfrac{du}{(1+u)} \, 
\frac{ u^{\theta} \log u^{\theta} - (1+u^{\theta})\log(1+u^{\theta})
  } {(1+u^{\theta})^{1 + [1/\theta]}} \;=\; 0
\end{equation*}
because the numerator vanishes as $\theta\uparrow\infty$. On the other
hand, the integral in the interval $[1,\beta/(1-\beta)]$ can be
written as
\begin{equation*}
-\; \int_1^{\beta/{1-\beta}} \cfrac{du}{u (1+u)} \, 
\frac{ \log(1+u^{-\theta}) + u^{-\theta} \log(1+u^{\theta})} 
{(1+u^{-\theta})^{1 + [1/\theta]}}\;\cdot
\end{equation*}
By the dominated convergence theorem, this expression vanishes as
$\theta\uparrow\infty$. This concludes the proof of the lemma.

\section{Energy band}
\label{sec04}

In this section, we determine the energy band $[E_{-},E_{+}]$, i.e.
the range of $V+{\bb S}$. Recall that ${\mc M}$ is the set of profiles
$m:[0,1] \to [0,1]$ and let ${\mc M}_+$, ${\mc M}_-$ be the set of
profiles $m$ of the form $m(x)={\bf 1} \{[0,x_0]\}$, $m(x)={\bf 1}
\{[x_0 , 1]\}$, respectively, for some $x_0 \in [0,1]$.

% \begin{figure}
% \begin{center}
% \label{fig:Emaxmin}
% \includegraphics[width=13cm]{FIN_EPLUSMINUS.png}
% \caption{The two energy band manifolds $E_{\pm}(\alpha,\beta)$}
% \end{center}
% \end{figure}

\begin{lemma}
\label{s03}
For every $0<\alpha \le \beta <1$,
\begin{equation}
\label{f06}
\sup_{m \in {\mc M}} \big\{ V (m) + {\bb S} (m) \big\}
\;=\; \sup_{m \in {\mc M}_+} \big\{ V (m) + {\bb S} (m) 
\big\} \;=\; E_+ \; ,
\end{equation}
and the supremum is attained for a unique profile $m \in {\mc M}_+$.
\end{lemma}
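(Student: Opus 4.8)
The plan is to exploit an exact cancellation between $V$ and ${\bb S}$, reduce the maximization to a scalar estimate through Jensen's inequality, and then exhibit the maximizer explicitly. First I would write out $V(m)+{\bb S}(m)$ and observe that the terms $m\log m$ and $(1-m)\log(1-m)$ produced by $V(m)$ cancel exactly against $-s(m)=-m\log m-(1-m)\log(1-m)$. This gives the structural simplification
\[
V(m)+{\bb S}(m)\;=\;\int_0^1\Big\{-\,m\log F-(1-m)\log(1-F)+\log\frac{F'}{\beta-\alpha}\Big\}\,dx,
\]
where $F=F_m$ is the increasing solution of \eqref{eq:bvp} associated with $m$; the self-entropy of $m$ has disappeared.

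Next I would bound the integrand pointwise. For each $x$ the quantity $-m\log F-(1-m)\log(1-F)$ is a convex combination, with weights $m$ and $1-m$, of $-\log F$ and $-\log(1-F)$, both positive since $F\in[\alpha,\beta]\subset(0,1)$; hence it is at most $-\log\min\{F,1-F\}$. Thus $V(m)+{\bb S}(m)\le\int_0^1\log\{F'/[(\beta-\alpha)\min\{F,1-F\}]\}\,dx$. Using only that $F$ is increasing with $F(0)=\alpha$, $F(1)=\beta$, the concavity of $\log$ (Jensen on the probability measure $dx$) followed by the change of variables $y=F(x)$ yields
\[
V(m)+{\bb S}(m)\;\le\;\log\Big(\frac{1}{\beta-\alpha}\int_\alpha^\beta\frac{dy}{\min\{y,1-y\}}\Big)\;=\;E_+
\]
for every $m\in{\mc M}$. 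The point is that this upper bound uses the coupling between $m$ and $F$ only through monotonicity and the boundary values, which is what makes it uniform over ${\mc M}$.

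Finally I would analyze the two equality cases to obtain attainment and uniqueness. Jensen is an equality iff $F'/\min\{F,1-F\}$ is constant, and the pointwise bound is an equality iff $m=\mathbf 1\{F<1/2\}$ almost everywhere. Solving $F'=c\min\{F,1-F\}$ with $F(0)=\alpha$, $F(1)=\beta$ gives the piecewise exponential $F=\alpha e^{cx}$ where $F<1/2$ and $1-F=(1-\beta)e^{c(1-x)}$ where $F>1/2$, the constant $c$ and the crossing point $x_0$ (with $F(x_0)=1/2$ when $\alpha<1/2<\beta$) being fixed by the boundary conditions. A direct verification shows that this $F$ solves \eqref{eq:bvp} for $m=\mathbf 1\{[0,x_0]\}\in{\mc M}_+$, the $C^1$ matching at $x_0$ being automatic since both one-sided slopes equal $c/2$. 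Because $F$ is increasing, $\{F<1/2\}=[0,x_0)$, so the bang-bang condition returns precisely $m=\mathbf 1\{[0,x_0]\}$; both inequalities are then equalities and $V(m)+{\bb S}(m)=E_+$. Conversely, any maximizer must satisfy both equality conditions, which determine $F$ (hence $c$ and $x_0$) and then $m$ uniquely, so the maximizer is unique and lies in ${\mc M}_+$, giving simultaneously $\sup_{\mc M}=\sup_{{\mc M}_+}=E_+$.

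I expect the real work to be in this last step rather than in the upper bound: one must dispose of the degenerate positions of $1/2$ relative to $[\alpha,\beta]$ (if $\beta\le1/2$ then $x_0=1$ and $m\equiv1$, if $\alpha\ge1/2$ then $x_0=0$ and $m\equiv0$, both still in ${\mc M}_+$), confirm that the constructed pair $(m,F)$ genuinely solves the boundary value problem, and recover the equilibrium case $\alpha=\beta$ by continuity.
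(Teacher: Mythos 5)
Your proof is correct, but it is organized quite differently from the paper's. The paper leans on the representation $V(m)+\bb S(m)=\sup_{F\in\mc F}\mc G(m,F)$ (their \eqref{f08}, quoted from the large-deviations paper) and proceeds by a minimax-type argument: for each fixed $F$ the optimal $m$ is the bang-bang profile $\mb 1\{F\le 1/2\}$, which yields the reduction $\sup_{\mc M}=\sup_{\mc M_+}$ as a separate first step; uniqueness of the maximizer then requires a chain-of-equalities argument exploiting that $\sup_F\mc G(m,F)$ is uniquely attained at the solution of \eqref{eq:bvp}; finally the value $E_+$ is obtained by explicitly solving \eqref{eq:bvp} for indicator profiles (piecewise exponentials with matching conditions \eqref{f10}, \eqref{f09}) and optimizing over $x_0$ under the constraint \eqref{f11}, using concavity of $\log$ on a two-point distribution. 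You bypass all of this: your cancellation identity $V(m)+\bb S(m)=\mc G(m,F_m)$ follows directly from the definition of $V$, and your pointwise convex-combination bound followed by Jensen's inequality and the change of variables $y=F(x)$ gives $\mc G(m,F)\le E_+$ uniformly over \emph{all} pairs $(m,F)\in\mc M\times\mc F$, with no need to solve the boundary value problem; the reduction to $\mc M_+$, the identification of the maximizer, and its uniqueness all fall out of the two equality cases (Jensen forces $F'=c\min\{F,1-F\}$, hence the piecewise exponential with $c$ and $x_0$ pinned down by the boundary data, and the pointwise bound forces $m=\mb 1\{F<1/2\}$). In essence your Jensen step is the global version of the paper's concavity-of-$\log$ step, applied before rather than after restricting to indicators, which is what makes the argument shorter and self-contained. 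What the paper's heavier machinery buys is reusability: the interchange of $\sup$ and $\inf$ over $\mc G$ is what drives the companion Lemma \ref{s04} for the minimum $E_-$, where a Jensen-type bound goes in the wrong direction, and the explicit formulas \eqref{f10}--\eqref{f11} are used again there. Two small points you should tighten: strict monotonicity of $F_m$ (immediate from $F'=c\min\{F,1-F\}>0$) is needed so that $\{F=1/2\}$ has measure zero in the equality analysis, and the equilibrium case $\alpha=\beta$ is better handled directly (there $F\equiv\alpha$ and the pointwise bound alone suffices) than "by continuity".
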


\begin{proof}
By \cite[(2.11)]{BSGJL},
\begin{equation}
\label{f08}
V(m) + {\bb S} (m) \;=\; \sup_{F \in {\mc F}} {\mc G}(m,F)
\end{equation}
where
\begin{equation*}
{\mc G}(m,F) \;=\; - \, \int_{0}^1  \; \Big\{ m(x) \log F(x)
+ [1-m(x)] \log [1-F(x)] - \log \frac{F^{\prime} (x)}{\beta
    -\alpha} \Big\} \, dx  \;,
\end{equation*}
and ${\mc F}$ is the set of all $C^{1}$ increasing functions $F:[0,1]
\to [0,1]$ with boundary conditions $F(0)=\alpha$, $F(1)=\beta$.
Moreover, the supremum is achieved at the unique solution $F$ of the
boundary value problem (\ref{eq:bvp}).

We claim that for all $F$ in $\mc F$,
\begin{equation}
\label{f07}
\sup_{m \in {\mc M}} {\mc G}(m,F) \;=\;
\sup_{m \in {\mc M}_+}  {\mc G} (m,F) \;.
\end{equation}
The first identity in \eqref{f06} follows from this assertion.  To
check \eqref{f07}, fix $F$ in $\mc F$ and note that the supremum is
achieved by $m={\bf 1}\{[0,x_F]\} \in {\mc M}_+$, where $x_F =\sup \{
x \in [0,1]; F(x) \le 1/2\}$, because $F$ is increasing.  Of course
$x_F = 1$ if $\beta \le 1/2$ and $x_F = 0$ if $\alpha \ge 1/2$.

We claim that the solutions of the variational problem \eqref{f06}
belong to $\mc M_+$. Assume that there exists $m_0 \in {\mc M}$ such
that $ V (m_0) + {\bb S} (m_0) = \sup_{m \in {\mc M}} \{ V (m) + {\bb
  S} (m) \}$.  Let $F_0$ be the unique solution of the boundary value
problem (\ref{eq:bvp}) associated to $m_0$, let $x_0 = \sup \{x \in
[0,1]; F_0 (x) \le 1/2 \}$, let $m_1 = {\bf 1} \{[0,x_0]\}$, and let
$F_1$ be the solution of (\ref{eq:bvp}) associated to $m_1$. By
\eqref{f08},
\begin{equation*}
{\mc G} (m_1 , F_0) \;\le\; {\mc G} (m_1 , F_1) \;=\; 
V (m_1) \;+\; {\bb S} (m_1)\;.
\end{equation*}
By definition of $m_0$ and by \eqref{f08}, the previous expression is
bounded above by
\begin{equation*}
\sup_{ m \in {\mc M}} \{ V(m) + {\bb S} (m) \} \;=\;
{\mc G} (m_0, F_0) \;\le\; \sup_{ m \in \mc M} {\mc G} (m, F_0)
\;=\; {\mc G} (m_1 , F_0)\;,
\end{equation*}
where the last identity follows from the argument presented in the
previous paragraph. Since the first and the last terms in this
sequence of inequalities are the same, all terms are equal and ${\mc
  G} (m_1 , F_0) = {\mc G} (m_1, F_1)$. Since, by \eqref{f08},
$\sup_{F} {\mc G} (m_1 , F)$ is uniquely attained at $F=F_1$, $F_0 =
F_1$. Therefore, in view of (\ref{eq:bvp}), $m_1 =m_0$ a.s.

Let $m_{x_0}= {\bf 1} \{[0,x_0 ]\}$, $x_0 \in [0,1]$, and let $F$ be
the solution of the nonlinear boundary value problem (\ref{eq:bvp})
with $m=m_{x_0}$. On the interval $[0,x_0]$, $F(x)=\alpha e^{a x}$ for
some $a>0$, and on the interval $[x_0,1]$, $F(x)=1-(1-\beta)
e^{A(1-x)}$ for some $A>0$. Since $F$ must belong to $C^1([0,1])$, $a$
and $A$ satisfy
\begin{equation*}
\begin{cases}
\alpha a e^{a x_0} =(1-\beta) A e^{A(1-x_0)}\; ,\\
\alpha e^{a x_0} = 1-(1-\beta) e^{A (1-x_0)}\; .
\end{cases}
\end{equation*}
Thus,
\begin{equation}
\label{f10}
e^{a x_0} \;=\; \cfrac{A}{ a+A}\, \frac 1 \alpha \; , \quad
e^{A(1-x_0)} \;=\; \cfrac{a}{a+A} \, \frac 1{1-\beta} \; ,
\end{equation}
and therefore,
\begin{equation}
\label{f09}
\frac 1a \log \Big\{ \frac A{a+A} \frac 1\alpha \Big\}  
\;+\; \frac 1A \log \Big\{ \frac a{a+A} \frac 1{1-\beta} \Big\} 
\;=\; 1\;.
\end{equation}
Moreover, since $x_0$ belongs to $[0,1]$, $a$, $A$ must satisfy
$\max\{1-\beta, 1-\alpha e^a\} \le a/(a+A) \le \min\{1-\alpha,
(1-\beta) e^A\}$. Let $f$, $g:[0,1]\to \bb R$ be given by $f(x) =
\alpha e^{ax}$, $g(x) = 1 - (1-\beta)e^{A(1-x)}$ and let $h(x) = f(x)
- g(x)$. Since $h$ is convex and $0\le x_0\le 1$, $h(x_0) =
h'(x_0)=0$, $h(0) \ge 0$, $h(1) \ge 0$. Hence, $1-\alpha \le (1-\beta)
e^A$, $1-\alpha e^a \le 1-\beta$ so that
\begin{equation}
\label{f11}
1-\beta \;\le\; \cfrac{a}{a+A} \le 1-\alpha\; \cdot
\end{equation}

By the explicit expression of $F$ and by \eqref{f10}, \eqref{f09},
${\bb S} (m_{x_0}) \; +\; V (m_{x_0})$ is equal to
\begin{equation*}
\begin{split}
& \int_{0}^{1} \log \Big\{ \frac{F' (x)}{\beta -\alpha} \Big\} \, dx 
\;-\;  \int_0^{x_0} \log F(x) \, dx \;-\; \int_{x_0}^1 \log (1-F(x))
\, dx \\
&\qquad =\; \frac 1a \log \Big( \frac A{a+A} \frac 1\alpha \Big) 
\log a \;+\; \frac 1A \log \Big( \frac a{a+A} \frac 1{1-\beta} \Big) 
\log A \;-\; \log (\beta - \alpha)\;.
\end{split}
\end{equation*}
Therefore, by the concavity of the $\log$ function and by \eqref{f10},
\begin{equation*}
\begin{split}
{\bb S} (m_{x_0}) \; +\; V (m_{x_0}) \; &\le\; 
\log \Big\{ \log \Big( \cfrac{A}{A+a} \, \cfrac{1}{\alpha}\Big) 
\;+\; \log \Big( \cfrac{a}{a+A} \, \cfrac{1}{1-\beta} \Big)\Big\}
\;-\; \log(\beta -\alpha)\\
& =\; \log \Big\{ \log \Big[ \cfrac{a}{a+A} 
\Big( 1-\cfrac{a}{a+A}\Big) \cfrac{1}{\alpha \, (1-\beta)}
\Big] \Big\} \;-\; \log(\beta -\alpha) 
\end{split}
\end{equation*}
By \eqref{f11}, the previous expression is bounded by
\begin{equation*}
\sup_{p \in [1-\beta, 1-\alpha]} 
\log \Big\{ \log \Big[ \cfrac{p\, (1-p)}{\alpha\, (1-\beta)}
\Big] \Big\} \;-\; \log(\beta -\alpha) \;=\; E_+ \;, 
\end{equation*}
where the last identity follows by a direct computation. The last
supremum is realized for $p=1-\beta$ if $\beta \le 1/2$, for $p=1/2$
if $\alpha \le 1/2 \le \beta$ and for $p=1-\alpha$ if $1/2 \le
\alpha$. 

Up to this point, we proved that $\sup_{m\in \mc M} \{ \bb S (m) + V
(m)\} \le E_+$. Assume that $\beta\le 1/2$ and set $x_0=1$. In this
case, by \eqref{f10}, $a/(a+A) = 1 -\beta$ and all inequalities in the
previous argument are in fact identities. In particular, $\bb S
(m_{1}) + V (m_{1}) = \sup_{m\in \mc M} \{ \bb S (m) + V (m)\} =
E_+$. Moreover, since $\log$ is a strictly concave function and since
by \eqref{f10} $a/(a+A) > (1-\beta)$ for $x_0<1$, $\bb S (m_{1}) + V
(m_{1}) > \bb S (m_{x_0}) + V (m_{x_0})$ for $x_0<1$.  In the same,
way, if $\alpha\ge 1/2$, $\bb S (m_{0}) + V (m_{0}) = \sup_{m\in \mc
  M} \{ \bb S (m) + V (m)\} = E_+$ and $\bb S (m_{0}) + V (m_{0}) >
\bb S (m_{x_0}) + V (m_{x_0})$ for $x_0>0$.

Finally, if $\alpha \le 1/2 \le \beta$, let $x =
\log(2\alpha)/\log[4\alpha (1-\beta)]$ and observe that
\begin{equation*}
\bb S (m_{x}) + V (m_{x}) \;=\; \sup_{m\in \mc M} 
\{ \bb S (m) + V (m)\} \;=\; E_+
\end{equation*}
and that $\bb S (m_{x}) + V (m_{x}) > \bb S (m_{x_0}) + V (m_{x_0})$
for $x_0\not = x$.
\end{proof}

\begin{lemma}
\label{s04}
For every $0<\alpha \le \beta <1$,
\begin{equation*}
\inf_{m \in {\mc M}} \big\{ V (m) + {\bb S} (m) \big\}
\;=\; \inf_{m \in {\mc M}_-} \big \{ V(m) + {\bb S} (m) \big\}
\;=\; E_{-} \;.
\end{equation*}
\end{lemma}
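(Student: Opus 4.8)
The plan is to mirror the structure of the proof of Lemma \ref{s03}, but now minimizing rather than maximizing, and replacing the cutoff $1/2$ by the appropriate endpoint behavior. The key observation is that the same representation \eqref{f08},
\begin{equation*}
V(m) + {\bb S}(m) \;=\; \sup_{F \in {\mc F}} {\mc G}(m,F)\;,
\end{equation*}
holds, so the infimum over $m$ of $V+\bb S$ equals $\inf_{m}\sup_{F}{\mc G}(m,F)$. First I would establish the analogue of \eqref{f07}: for each fixed $F\in\mc F$, the map $m\mapsto {\mc G}(m,F)$ is \emph{affine} in $m$, with pointwise integrand coefficient $-\log[F(x)/(1-F(x))]$. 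Hence its infimum over $m\in\mc M$ is attained at the extremal profile that puts $m(x)=0$ where $F(x)<1/2$ and $m(x)=1$ where $F(x)>1/2$. Since $F$ is increasing, this minimizer is $m={\bf 1}\{[x_F,1]\}\in\mc M_-$, where $x_F=\sup\{x:F(x)\le 1/2\}$. This is precisely the mirror image of the maximizing argument, and it yields $\inf_{m\in\mc M}{\mc G}(m,F)=\inf_{m\in\mc M_-}{\mc G}(m,F)$, giving the first equality once one checks that infimum and supremum can be interchanged by the same selection-of-optimal-profile argument used for Lemma \ref{s03}.

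Next I would restrict attention to profiles $m_{x_0}={\bf 1}\{[x_0,1]\}$ in $\mc M_-$ and solve the boundary value problem \eqref{eq:bvp} explicitly. By the same reasoning as in Lemma \ref{s03}, on $[0,x_0]$ the solution has the form $F(x)=\alpha e^{ax}$ and on $[x_0,1]$ the form $F(x)=1-(1-\beta)e^{A(1-x)}$, now with the signs of $a,A$ reversed relative to the previous lemma (here $F$ must still be increasing but the profile is $0$ then $1$, so the convexity/concavity of the gluing changes). The $C^1$ matching at $x_0$ again produces relations of the type \eqref{f10}, \eqref{f09}, and a direct computation of
\begin{equation*}
{\bb S}(m_{x_0}) + V(m_{x_0}) \;=\; \int_0^1 \log\frac{F'(x)}{\beta-\alpha}\,dx \;-\;\int_{x_0}^1 \log F(x)\,dx \;-\; \int_0^{x_0}\log(1-F(x))\,dx
\end{equation*}
reduces to an expression in $a,A$. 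The crucial sign flip is that where Lemma \ref{s03} used concavity of $\log$ to get an upper bound by $E_+$, here I expect to use the reverse estimate (monotonicity, or a one-sided bound of the same logarithmic quantity) to obtain a lower bound by $E_-$, with the extremal constraint coming from the reversed version of \eqref{f11}.

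I would then close the argument exactly as in Lemma \ref{s03}: show that any minimizer in $\mc M$ must in fact lie in $\mc M_-$, by taking a putative minimizer $m_0$, forming its associated $F_0$, defining the extremal profile $m_1={\bf 1}\{[x_0,1]\}$ with $x_0=\sup\{x:F_0(x)\le1/2\}$, and using the uniqueness of the maximizer in $\sup_F {\mc G}(m_1,F)$ to force $F_0=F_1$ and hence $m_1=m_0$ a.s. Finally I would exhibit the explicit minimizing profile in each of the three regimes $\beta\le 1/2$, $\alpha\le 1/2\le\beta$, $1/2\le\alpha$, verifying that the value equals $E_-$ and identifying the optimal cutoff. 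The main obstacle will be the explicit computation in the middle step: pinning down the correct signs of $a$ and $A$ and confirming that the reversed logarithmic estimate indeed produces $E_-$ rather than some larger value. I expect this to be a routine but delicate variant of the calculation already carried out for $E_+$, and the cleanest route is probably to exploit a symmetry $m\mapsto 1-m$, $F\mapsto 1-F$, $\alpha\mapsto 1-\beta$, $\beta\mapsto 1-\alpha$ that interchanges the maximization and minimization problems and maps $E_+$ to $E_-$, thereby deducing Lemma \ref{s04} directly from Lemma \ref{s03} with minimal extra work.
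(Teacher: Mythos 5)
Your reduction of $\inf_{m\in\mc M}\mc G(m,F)$ to $\mc M_-$ for fixed $F$ is correct and matches the paper, but your plan founders on the point you dismiss as a routine check: the interchange of infimum and supremum. In Lemma \ref{s03} no interchange is needed, since $\sup_m\sup_F\mc G=\sup_F\sup_m\mc G$ trivially, and the chain of inequalities localizing maximizers in $\mc M_+$ closes precisely because both operations point the same way: $\mc G(m_1,F_0)\le\sup_F\mc G(m_1,F)=V(m_1)+{\bb S}(m_1)\le V(m_0)+{\bb S}(m_0)=\mc G(m_0,F_0)\le\sup_m\mc G(m,F_0)=\mc G(m_1,F_0)$. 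For the infimum this loop cannot be closed: with $m_0$ a putative minimizer and $m_1$ minimizing $\mc G(\cdot,F_0)$, the two available inequalities, $V(m_1)+{\bb S}(m_1)=\sup_F\mc G(m_1,F)\ge\mc G(m_1,F_0)$ and $V(m_0)+{\bb S}(m_0)=\mc G(m_0,F_0)\ge\inf_m\mc G(m,F_0)=\mc G(m_1,F_0)$, bound both quantities from below by the same number, so nothing follows about $m_1$ versus $m_0$; your closing step ``exactly as in Lemma \ref{s03}'' is therefore not available. All that comes for free is weak duality, $\inf_m\sup_F\ge\sup_F\inf_m$. The paper supplies the missing converse by explicit construction: for $\alpha<1/2<\beta$ it computes $\sup_{F\in\mc F_x}\mc G(\mb 1\{[x,1]\},F)$ by solving \eqref{eq:bvp} separately on $[0,x]$ (with $m\equiv 0$) and $[x,1]$ (with $m\equiv 1$) under the constraint $F(x)=1/2$, maximizes over $x$ to find $x^*=\log[2(1-\alpha)]/\log[4(1-\alpha)\beta]$ with value $E_-$, and then verifies that the two pieces satisfy $F_{x^*,0}'(x^*)=F_{x^*,1}'(x^*)$, so the glued function lies in $\mc F$ and solves \eqref{eq:bvp} for $m=\mb 1\{[x^*,1]\}$; by \eqref{f08} this yields the matching upper bound $\inf_m\{V+{\bb S}\}\le E_-$. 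This $C^1$-matching verification is the crux of the lemma and is entirely absent from your proposal.

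Two further points are wrong as stated. First, the explicit solutions: for $m=\mb 1\{[x_0,1]\}$ one has $m=0$ on $[0,x_0]$, whose increasing solutions of \eqref{eq:bvp} have the form $F(x)=1-(1-\alpha)e^{-\lambda x}$ with $\lambda>0$, and $m=1$ on $[x_0,1]$, giving $F(x)=\beta e^{b(x-1)}$ with $b>0$; that is, the two functional forms of Lemma \ref{s03} swap intervals. They do not keep the same shape ``with the signs of $a,A$ reversed'' --- reversing the sign in $\alpha e^{ax}$ produces a decreasing $F$, contradicting $F\in\mc F$. Second, the symmetry you advertise as the cleanest route cannot work: the particle--hole map $m(x)\mapsto 1-m(1-x)$, $F(x)\mapsto 1-F(1-x)$, $(\alpha,\beta)\mapsto(1-\beta,1-\alpha)$ is a bijection of $\mc M$ carrying $V_{\alpha,\beta}+{\bb S}$ onto $V_{1-\beta,1-\alpha}+{\bb S}$, hence it maps suprema to suprema and infima to infima, and a change of variables shows $E_\pm(\alpha,\beta)=E_\pm(1-\beta,1-\alpha)$, so $E_+$ is not mapped to $E_-$. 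No symmetry of the functional can interchange a maximization with a minimization unless it reverses the functional's sign, which this one does not; Lemma \ref{s04} genuinely requires the separate min--max argument above rather than a reflection of Lemma \ref{s03}.
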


\begin{proof}
Recall \eqref{f08}. Since for each $F$ in $\mc F$, ${\mc G}(\cdot,F)$
is a continuous function for the weak topology, $V+{\bb S}$ is lower
semicontinuous. In view of the explicit expression of $\mc G$ and by
Jensen's inequality, $V+{\bb S}$ is bounded below. Hence, there exists
$m_0 \in {\mc M}$ such that $V(m_0) + {\bb S} (m_0) = \inf_{m \in {\mc
    M}} \{ V (m) + {\bb S} (m) \}$.

Let $E^*_- = \inf_{m \in {\mc M}} \{ V (m) + {\bb S} (m) \} =
\inf_{m\in \mc M} \sup_{F\in \mc F} \mc G(m,F)$. This expression is
bounded below by $\sup_{F\in \mc F} \inf_{m\in \mc M} \mc G(m,F)$. By
the explicit expression of $\mc G$,
\begin{equation*}
\inf_{m\in \mc M} \mc G(m,F) \;=\; \mc G(m_F,F)\;,
\end{equation*}
where $m_F$ belongs to $\mc M_-$. If $\alpha\ge 1/2$, $\beta\le 1/2$,
$m_F \equiv 1$, $m_F \equiv 0$, respectively. Otherwise, $m_F = \mb
1\{[x_F, 1]\}$, where $x_F$ is the unique point where $F$ is equal to
$1/2$.

Assume that $\alpha \ge 1/2$. In this case, $E^*_- \ge \sup_{F\in \mc F}
\mc G(\mb 1,F) = V (\mb 1) + {\bb S} (\mb 1)  \ge E^*_-$. To
conclude the proof of the lemma it remains to compute $\sup_{F\in \mc F}
\mc G(\mb 1,F)$ which can be done as in the previous lemma. The case
$\beta \le 1/2$ is similar.

Assume that $\alpha < 1/2 < \beta$. In this case we have that $E^*_- \ge
\sup_{F\in \mc F} \mc G(m_F,F) = \sup_{0<x<1} \sup_{F\in \mc F_x} \mc
G(\mb 1\{[x,1]\},F)$, where $\mc F_x = \{F\in \mc F : F(x) = 1/2\}$.
Fix $0<x<1$. In each interval $[0,x]$, $[x,1]$, the variational
problem $\sup_{F\in \mc F_x} \mc G(\mb 1\{[x,1]\},F)$ is similar to
the one in \eqref{f08}. In the interval $[0,x]$ the solution $F_{x,0}$
of this variational problem solves the differential equation
\eqref{eq:bvp} with $m\equiv 0$ and boundary conditions $F(0)= \alpha$,
$F(x)=1/2$. Analogously, in the interval $[x,1]$ the solution
$F_{x,1}$ solves the differential equation \eqref{eq:bvp} with
$m\equiv 1$ and boundary conditions $F(x)=1/2$, $F(1) =\beta$.  These
solutions can be computed explicitly and one obtains that
\begin{equation*} 
\begin{split}
& F_{x,0}(y) \;=\; 1 - (1-\alpha) \exp\Big\{ - \frac{\log [2(1-\alpha)]}x \,
y \Big\} \;, \quad 0\le y\le x\;, \\
& \quad F_{x,1}(y) \;=\;  \beta \exp\Big\{ \frac{\log (2\beta)}{1-x} \, 
(y-1) \Big\}\; , \quad x\le y\le 1\; , \\
&\quad \sup_{F\in \mc F_x} \mc G(\mb 1\{[x,1]\},F) \;=\;
x\log \frac{\log 2 (1-\alpha)} x \;+\; (1-x) \log \frac {\log
  2\beta}{1-x} \;-\; \log (\beta - \alpha) \; .
\end{split}
\end{equation*}
Maximizing over $x$ we deduce that $x^*= \log [2(1-\alpha)] /\log
[4(1-\alpha)\beta]$ is the optimal value of $x$ and that
\begin{equation*}
E^*_- \;\ge\; \log \Big\{ \frac 1{\beta - \alpha} \int_\alpha^\beta
\frac {dx}{\max \{x,1-x\}} \Big\} \; \cdot 
\end{equation*}
Moreover, a simple computation shows that for $F_{x^*,0}'(x^*) =
F_{x^*,1}'(x^*)$. Hence, the function $G:[0,1] \to [\alpha, \beta]$
defined by $G(y) = F_{x^*,0} (y) \mb 1\{y\in [0,x^*]\} + F_{x^*,1} (y)
\mb 1\{y\in (x^*,1]\}$ belongs to $\mc F$ and solves the boundary
value problem \eqref{eq:bvp} for $m= \mb 1\{[x^*,1]\}$. Hence, by
definition of $E^*_-$ and by \eqref{f08}
\begin{equation*}
\begin{split}
& E^*_- \;\le \; \bb S(\mb 1\{[x^*,1]\}) \;+\; V(\mb 1\{[x^*,1]\}) 
\; =\; \mc G(\mb 1\{[x^*,1]\}, G) \\
&\qquad \;=\; 
\log \Big\{ \frac 1{\beta - \alpha} \int_\alpha^\beta
\frac 1{\max \{x,1-x\}} \Big\}  \;. 
\end{split}
\end{equation*}
This proves the lemma and shows that a profile with minimum energy is
given by
\begin{equation*}
m_{\alpha, \beta} \; =\; \mb 1\Big\{ \Big[ \frac {\log [2(1-\alpha)]}
{\log [4(1-\alpha)\beta]} \,,\, 1 \Big] \Big\} \;.
\end{equation*}
\end{proof}

We proved in the previous lemma that
\begin{equation*}
\inf_{m\in \mc M} \, \sup_{F\in \mc F} \, \mc G(m,F) \;=\;
\sup_{F\in \mc F} \,\inf_{m\in \mc M} \, \mc G(m,F) \; .
\end{equation*}

Fix $0<\alpha \le \beta <1$ and assume that $V (m) + {\bb S} (m)=
E^*_-$. Then, 
\begin{equation}
\label{s06}
\text{$m(1-m) =0$  a. s.}
\end{equation}

Indeed, fix a profile $m$ in $\mc M$. Since $V (m) + {\bb S} (m) = \mc
G(m,F)$, where $F$ is the solution of \eqref{eq:bvp},
\begin{equation*}
  \frac{\delta [V + {\bb S} ] (m) }{\delta m} 
  \;=\; \frac{\delta \mc G (m,F)}{\delta m} \;+\;
  \frac{\delta \mc G (m,F)}{\delta F} \, \frac{\delta F}{\delta m}\; 
  \cdot
\end{equation*}
Since $F$ solves the Euler equation \eqref{eq:bvp} associated to the
variational problem $\sup_G \mc G(m,G)$, $\delta G (m,F)/\delta F=0$.
On the other hand, $\delta \mc G / \delta m = - \log [F/(1-F)]$, so
that
\begin{equation*}
\frac{\delta [V + {\bb S} ] (m) }{\delta m}  \;=\; -\, \log 
\frac F {1-F} \; \cdot 
\end{equation*}
This expression does not vanish because $F$ is strictly
increasing. Therefore, the extremal values of $V + {\bb S}$ are
attained at the boundary. 

This formal argument can be made rigorous. By the proof of Theorem 7.1
in \cite{bdgjl13}, $V$ and therefore $V+\bb S$ is G\^ateaux
differentiable, and the G\^ateaux derivative of $V+\bb S$ at $m$ is
equal to $-\log [F/(1-F)]$. \medskip

It follows from the previous results and the variational
formula \eqref{eq:varS} that for $0<\alpha \le \beta <1$,
\begin{equation}
\label{s05}
\text{$S(E)=-\infty$ for $E \notin [E_- , E_+]$ and $S(E_\pm)=0$.}
\end{equation}

\section{Isentropic surface}
\label{sec:isen}

We determine in this section the isentropic surfaces defined by

\begin{equation*}
{\mc E}_K \; =\; \{ E \ge 0 : S_{\alpha,\beta} (E)=K \}, \quad K \in
[0,\log 2]\; . 
\end{equation*}

\subsection{The equilibrium case} 

Assume that $\alpha = \beta$. We have already seen right after
\eqref{f12} that the energy band is reduced to the point $\log 2$ in
the case $\alpha = 1/2$. Assume therefore that $\alpha\not =1/2$ and
fix $K\in [0,\log 2)$. There exist exactly two solutions $0<m_{-} (K)
<1/2 <m_+ (K) <1$ of $-s(m)=K$.  Hence, in view of \eqref{f12}, the
level set ${\mc E}_K = \{E_-(K), E_+(K)\}$, where
\begin{equation*}
E_\pm(K) \;=\; -\, \log (1-\alpha) \; -\; \log \frac \alpha{1-\alpha}
m_{\pm} (K) \; .
\end{equation*}
For $K=\log 2$, ${\mc E}_K$ is the singleton $\{ -(1/2) [\log \alpha +
\log(1-\alpha)]\}$.

\subsection{The nonequilibrium case}

Assume now that $\alpha < \beta$. Let $D: \bb R \to \bb R$ be given by
$D(\theta) = \theta P^{\prime} (\theta) - P (\theta)$.
Since $D^{\prime} (\theta) = \theta P'' (\theta)$ and since, by
Lemma \ref{s01}, $P$ is strictly concave, $D$ is strictly
increasing on $(-\infty;0]$ and strictly decreasing on $[0,+\infty)$.
Moreover $D(0)=\log 2$ and, by Lemma \ref{s01}, $\lim_{\theta \to \pm
  \infty} D(\theta) =0$. In particular, for every $K \in (0,\log2)$
there exist exactly two values $\theta_-(K) < 0 < \theta_+(K) $ such
that $\theta_\pm(K) P^{\prime} (\theta_\pm (K)) - P
(\theta_\pm(K)) = K$.

Fix $0< K < \log 2$. By \eqref{S2}, $E$ belongs to ${\mc E}_K$ if and
only if $K = \theta_E P^{\prime} (\theta_E) - P (\theta_E) =
D(\theta_E)$, where $P'(\theta_E)=E$. Hence $\theta_E = \theta_\pm(K)$
so that $E= P'(\theta_\pm(K))$ and
\begin{equation*}
{\mc E}_K = \big\{ P'(\theta_+(K)), P'(\theta_-(K)) \}\; .
\end{equation*}

If we let $\alpha$ and $\beta$ vary, we see that the $K$-isentropic
surface is composed of the two manifolds $E_K^{-} =
P'(\theta_+(K))$ and $E_K^{+} = P'(\theta_-(K))$ which satisfy
\begin{equation*}
E_K^{-} (\alpha,\beta) \;\le\; E_{K}^{+} (\alpha,\beta) \; .
\end{equation*}

% \begin{figure}
% \label{fig:eplusminus}
% \begin{center}
% \includegraphics[width=13cm]{Final_LevelSet.png}
% \caption{Graph of the isentropic manifolds $E_K^{\pm}(\alpha,\beta)$
%   w.r.t. $(\alpha,\beta)$ for $K=(\log 2)/4$} 
% \end{center}
% \end{figure}

\section{Comparison with local equilibria}
\label{sec:le}

In this section, we compare the entropy function
$S_{\alpha,\beta}$ with the entropy function associated to product
measures with a slowly varying density profile that will be called
local equilibrium entropies.

Let $\nu_{\alpha,\beta}^N$ be the product probability measure on
$\{0,1\}^{N-1}$ given by
\begin{equation*}
\nu_{\alpha,\beta}^N (\eta) \;=\; \prod_{x=1}^{N-1} 
\bar\rho (x/N)^{\eta_x} (1-\bar\rho (x/N))^{1-\eta_x}\; ,
\end{equation*}
where $\bar\rho : [0,1] \to [\alpha, \beta]$ is the stationary profile
$\bar\rho (x) = (1-x) \alpha + x \beta$. Denote by $\tilde S :=
\tilde S_{\alpha,\beta} : \bb R_+ \to \{-\infty\}\cup [0,\log 2]$ the
entropy function corresponding to the Gibbs state
$\nu_{\alpha,\beta}^N$:
\begin{equation}
\label{f13}
\tilde S (E) \;=\; \lim_{\delta \to 0} \lim_{N \to \infty}
\cfrac{1}{N} \log \sum_{ \eta \in \Omega_N} {\bf 1} \left\{
\left| N^{-1} \log \nu^N_{\alpha, \beta} (\eta) +E \right| \le
        \delta\right\}
\end{equation} 
whenever the limits exist.

Let ${\tilde P} :={\tilde P}_{\alpha,\beta} : \bb R \to \bb
R$ be the function defined by
\begin{equation*}
{\tilde P} (\theta) \;=\; 
\cfrac{1}{\beta -\alpha} \int_{\alpha}^{\beta}
\log \Big( \cfrac{1}{ x^\theta +(1-x)^\theta} \Big) \, dx
\end{equation*}

\begin{lemma}
\label{s012}
${\tilde P}$ is a $C^2$ strictly concave function and
\begin{equation*}
\lim_{\theta \to \pm \infty} \frac {\tilde P
(\theta)} \theta \,  \; =\; {\tilde E}_{\mp} \; ,\quad
\lim_{\theta \to \pm \infty} \{ \tilde P  (\theta) 
\,-\, \theta { \tilde P}^{\prime} (\theta) \} \;=\; 0\; ,
\end{equation*}
where
\begin{equation*}
{\tilde E}_-  \;=\; \cfrac{1}{\beta -\alpha} \int_{\alpha}^{\beta} 
\log  \cfrac{1}{\max\{x,1-x\}} \;  dx \; , \;\;
{\tilde E}_+ \;=\; \cfrac{1}{\beta -\alpha} \int_{\alpha}^{\beta} 
\log \cfrac{1}{\min\{x,1-x\}} \; dx \;.
\end{equation*}
Moreover, as $\theta\to 0$,
\begin{equation*}
{\tilde P} (\theta) \;=\; -\log 2 \;+\; 
\cfrac{\theta }{\beta -\alpha} \int_{\alpha}^{\beta} 
\log  \cfrac{1}{\sqrt{x (1-x)}} \; 
dx  \;+\;  O(\theta^2)\; .
\end{equation*}
\end{lemma}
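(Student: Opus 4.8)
The plan is to exploit the fact that, unlike $P$ in Lemma \ref{s01}, the function $\tilde P$ is an explicit integral with no implicitly defined constant, so every assertion reduces to an elementary property of the integrand. Set
$$\psi_x(\theta) \;:=\; \log\big( x^\theta + (1-x)^\theta \big)\;,\qquad \tilde P(\theta) \;=\; -\,\frac{1}{\beta-\alpha}\int_\alpha^\beta \psi_x(\theta)\, dx\;.$$
Since $\alpha$ and $\beta$ are bounded away from $0$ and $1$, the integrand and all its $\theta$-derivatives are continuous and bounded on $[\alpha,\beta]\times J$ for every compact $J\subset\RR$, so I may differentiate under the integral sign; this shows that $\tilde P$ is $C^\infty$, in particular $C^2$. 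The first step is to observe that $\psi_x(\theta) = \log 2 + \Lambda_x(\theta)$, where $\Lambda_x(\theta) = \log \E[e^{\theta Y_x}]$ is the logarithmic moment generating function of the random variable $Y_x$ taking the values $\log x$ and $\log(1-x)$, each with probability $1/2$. Hence $\psi_x$ is convex in $\theta$, and $\tilde P$, being $-(\beta-\alpha)^{-1}$ times the integral of convex functions, is concave.

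For strict concavity I would use $\psi_x''(\theta) = \Lambda_x''(\theta) = \mathrm{Var}_\theta(Y_x)$, the variance of $Y_x$ under the tilted law proportional to $e^{\theta Y_x}$. This variance is strictly positive unless $Y_x$ is degenerate, i.e.\ unless $\log x = \log(1-x)$, that is $x=1/2$. As $\alpha<\beta$, the point $x=1/2$ is Lebesgue-negligible in $[\alpha,\beta]$, so
$$\tilde P''(\theta) \;=\; -\,\frac{1}{\beta-\alpha}\int_\alpha^\beta \mathrm{Var}_\theta(Y_x)\, dx \;<\; 0$$
for every $\theta$, which gives strict concavity.

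For the asymptotics as $\theta\to\pm\infty$ the key elementary identity is
$$\frac 1\theta\,\psi_x(\theta) \;=\; \log\max\{x,1-x\} \;+\; \frac 1\theta\log\big(1 + r_x^{|\theta|}\big)\;,\qquad r_x := \frac{\min\{x,1-x\}}{\max\{x,1-x\}}\le 1\;,$$
written here for $\theta>0$; the case $\theta<0$ merely exchanges the roles of $\max$ and $\min$. The remainder is bounded by $(\log 2)/|\theta|$, so dominated convergence gives $\tilde P(\theta)/\theta \to \tilde E_\mp$. For the second limit I would again split off the dominant term to get, for $x>1/2$ and $\theta>0$ (the remaining cases following by symmetry, $\psi_x$ being invariant under $x\mapsto 1-x$),
$$\psi_x(\theta) - \theta\psi_x'(\theta) \;=\; \log\big(1 + r_x^{\theta}\big) \;-\; \frac{\theta\, r_x^{\theta}\log r_x}{1 + r_x^{\theta}}\;.$$
The substitution $u = r_x^{\theta}\in(0,1]$ turns the delicate second term into $u\log(1/u)/(1+u)$, which is bounded by $1/e$ uniformly in $x$ and $\theta$ and tends to $0$ for every fixed $x\neq 1/2$; dominated convergence then yields $\tilde P(\theta) - \theta\tilde P'(\theta)\to 0$. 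I expect this uniform bound to be the one genuinely non-routine point: naively $\theta r_x^\theta$ blows up as $x\to 1/2$, and it is precisely the change of variables $u=r_x^\theta$ that tames it.

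Finally, the behaviour near $\theta=0$ follows from the $C^2$ regularity already established together with the two direct evaluations $\psi_x(0)=\log 2$ and $\psi_x'(0)=\tfrac12\log\big(x(1-x)\big)$. These give $\tilde P(0)=-\log 2$ and $\tilde P'(0)=(\beta-\alpha)^{-1}\int_\alpha^\beta \log\big(1/\sqrt{x(1-x)}\big)\,dx$, and Taylor's theorem with the bounded second derivative supplies the $O(\theta^2)$ error, completing the proof.
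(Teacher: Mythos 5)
Your proof is correct and complete. There is, however, no argument in the paper to compare it against: for Lemma \ref{s012} the authors simply write that ``the proof is elementary and left to the reader,'' so your write-up is a filling-in of an omitted proof rather than an alternative to a given one. Relative to the paper, two points are worth noting. First, your opening observation identifies precisely why the authors could afford to omit this proof while spending all of Section \ref{sec03} on the analogous Lemma \ref{s01}: $\tilde P$ is an explicit average over $x$ of the functions $-\psi_x(\theta)$, with no implicitly defined normalization like $A(\theta)$, so every claim reduces to pointwise properties of $\psi_x$. Second, your recognition of $\psi_x(\theta)-\log 2$ as the logarithmic moment generating function of the two-point variable $Y_x$, so that $\psi_x''(\theta)=\mathrm{Var}_\theta(Y_x)>0$ for $x\neq 1/2$, is the clean analogue of the variance representation \eqref{eq:PP} that the paper establishes for $P''$ by computation; in your setting no Schwarz-type estimate such as \eqref{ff01} is needed, exactly because there is no $\theta$-dependent constant competing with the variance term. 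The remaining steps are sound: the split $\psi_x(\theta)=\theta\log\max\{x,1-x\}+\log(1+r_x^\theta)$ with remainder bounded by $\log 2$ gives the first limit; the substitution $u=r_x^\theta$, with the uniform bound $u\log(1/u)/(1+u)\le 1/e$ and pointwise convergence off the Lebesgue-null set $\{x=1/2\}$, legitimizes dominated convergence for the second limit (and correctly defuses the apparent blow-up of $\theta r_x^\theta$ near $x=1/2$); and the expansion at $\theta=0$ follows from $\psi_x(0)=\log 2$, $\psi_x'(0)=\tfrac12\log(x(1-x))$ and the continuity of $\tilde P''$.
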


The proof is elementary and left to the reader. It follows from this
result that
\begin{equation*}
\lim_{\theta \to \pm \infty} {\tilde P}^{\prime} (\theta) 
\;=\; {\tilde E}_{\mp} \;.
\end{equation*}

\begin{proposition}
\label{s07}
For $E \ge 0$,
\begin{equation*}
\tilde S (E)\; =\; \inf_{\theta \in \RR} \big \{ \theta E - 
{\tilde P} (\theta) \big\}\;.
\end{equation*}
If $E$ belongs to the energy band $({\tilde E}_-, {\tilde E}_{+})$,
the infimum is attained at ${\tilde \theta}_E = {\tilde \theta}_E
(\alpha,\beta)$ the unique solution of ${\tilde P}^{\prime}
(\theta) =E$ and $\tilde S(E) = {\tilde \theta}_E E -{\tilde P}
({\tilde \theta}_E)$. $\tilde S (E)=-\infty$ if $E \notin [{\tilde E}_-,
{\tilde E}_{+}]$ and $\tilde S ({\tilde E}_{\pm})=0$.
\end{proposition}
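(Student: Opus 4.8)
The plan is to exploit the product structure of $\nu^N_{\alpha,\beta}$, which turns $-N^{-1}\log\nu^N_{\alpha,\beta}(\eta)$ into an empirical average of independent random variables, and then to apply the G\"artner--Ellis theorem together with the properties of $\tilde P$ recorded in Lemma \ref{s012}. Exactly as in \eqref{l2}, I would first insert $\nu^N_{\alpha,\beta}(\eta)$ into the sum defining \eqref{f13} to obtain
\begin{equation*}
\tilde S(E) \;=\; E \;+\; \lim_{\delta\to 0}\lim_{N\to\infty}\frac 1N
\log \nu^N_{\alpha,\beta}\Big\{ \big| N^{-1}\log\nu^N_{\alpha,\beta}(\eta) + E\big| \le \delta\Big\}\;,
\end{equation*}
reducing matters to the large deviations of $Y_N := -N^{-1}\log\nu^N_{\alpha,\beta}(\eta)$ under $\nu^N_{\alpha,\beta}$.

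Since $\nu^N_{\alpha,\beta}$ is a product measure, $-\log\nu^N_{\alpha,\beta}(\eta) = \sum_{x=1}^{N-1}\xi_x$ with $\xi_x = -\eta(x)\log\bar\rho(x/N) - [1-\eta(x)]\log[1-\bar\rho(x/N)]$ independent under $\nu^N_{\alpha,\beta}$. A direct computation of the moment generating function of the Bernoulli variable $\eta(x)$ gives $E_{\nu^N_{\alpha,\beta}}[e^{\theta\xi_x}] = \bar\rho(x/N)^{1-\theta} + [1-\bar\rho(x/N)]^{1-\theta}$, whence the limiting scaled cumulant generating function is the Riemann integral
\begin{equation*}
\Lambda(\theta) \;=\; \lim_{N\to\infty}\frac 1N \sum_{x=1}^{N-1}
\log\big\{ \bar\rho(x/N)^{1-\theta} + [1-\bar\rho(x/N)]^{1-\theta}\big\}
\;=\; \int_0^1 \log\big\{ \bar\rho(x)^{1-\theta} + [1-\bar\rho(x)]^{1-\theta}\big\}\, dx\;.
\end{equation*}
The change of variables $y = \bar\rho(x)$ identifies this as $\Lambda(\theta) = -\tilde P(1-\theta)$.

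By Lemma \ref{s012}, $\tilde P \in C^2$ is strictly concave, so $\Lambda$ is finite and differentiable on all of $\RR$ and the G\"artner--Ellis theorem yields a large deviations principle for $Y_N$ with rate function $\tilde I(E) = \sup_\theta\{\theta E - \Lambda(\theta)\}$. Combining this with the displayed reduction gives $\tilde S(E) = E - \tilde I(E)$; inserting $\Lambda(\theta) = -\tilde P(1-\theta)$ and substituting $\phi = 1-\theta$ collapses the expression to $\tilde S(E) = \inf_{\phi\in\RR}\{\phi E - \tilde P(\phi)\}$, the asserted formula. For $E$ in the open band $(\tilde E_-, \tilde E_+)$ the map $\phi\mapsto \phi E - \tilde P(\phi)$ is strictly convex, and since by Lemma \ref{s012} the derivative $\tilde P'$ maps $\RR$ decreasingly onto $(\tilde E_-, \tilde E_+)$, this map tends to $+\infty$ at $\pm\infty$; hence the infimum is attained at the unique $\tilde\theta_E$ solving $\tilde P'(\tilde\theta_E) = E$, giving $\tilde S(E) = \tilde\theta_E E - \tilde P(\tilde\theta_E)$.

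It remains to treat the endpoints, which I would argue directly from the counting definition \eqref{f13}. Choosing at each site the more, respectively less, likely occupation value shows that $Y_N$ takes values in an interval whose endpoints converge to $\int_0^1 -\log\max\{\bar\rho,1-\bar\rho\}\,dx = \tilde E_-$ and $\int_0^1 -\log\min\{\bar\rho,1-\bar\rho\}\,dx = \tilde E_+$; therefore for $E \notin [\tilde E_-, \tilde E_+]$ the indicator in \eqref{f13} vanishes once $\delta$ is small and $N$ large, so $\tilde S(E) = -\infty$, while at $E = \tilde E_\pm$ only configurations extremal at all but a vanishing fraction of sites contribute, so the cardinality grows subexponentially and $\tilde S(\tilde E_\pm) = 0$. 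The main obstacle is the rigorous justification of the G\"artner--Ellis step for this triangular array of independent, non-identically distributed variables: one must check that the Riemann-sum convergence to $\Lambda$ is strong enough, and that the smoothness hypotheses hold, to secure the full large deviations upper and lower bounds rather than merely the limiting generating function.
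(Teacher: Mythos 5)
Your proof is correct, but it takes a genuinely different route from the paper's. The paper multiplies and divides the indicator in \eqref{f13} by $2^{N-1}$, turning the count into a large deviations problem for the empirical density profile of independent Bernoulli variables; this yields the profile-level variational formula $\tilde S(E)=\sup_{m\in\mc M}\{\bb S(m): I(m)+\bb S(m)=E\}$ (the analogue of \eqref{eq:varS}), which is then solved by the same Lagrange-multiplier computation used for \eqref{eq:S1}, producing the explicit maximizer $\bar\rho^{\,\theta}/[(1-\bar\rho)^{\theta}+\bar\rho^{\,\theta}]$ with $\tilde P'(\theta)=E$; the Legendre duality and the endpoint claims are then handled as in the proof of \eqref{S2}. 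You instead work with the scalar variable $Y_N=-N^{-1}\log\nu^N_{\alpha,\beta}(\eta)$ directly: it is an average of independent, uniformly bounded random variables (bounded because $0<\alpha\le\bar\rho\le\beta<1$), its scaled cumulant generating function converges pointwise to $\Lambda(\theta)=-\tilde P(1-\theta)$, and the G\"artner--Ellis theorem gives a one-dimensional LDP, whence $\tilde S(E)=\inf_{\phi}\{\phi E-\tilde P(\phi)\}$ after the substitution $\phi=1-\theta$. Your route is shorter and delivers the Legendre-transform identity without solving any infinite-dimensional constrained variational problem; what it does not deliver is the optimizing density profile, which the paper's route exhibits and which keeps the treatment exactly parallel to the nonequilibrium entropy --- the feature exploited in the comparison of $S$ and $\tilde S$ later in Section \ref{sec:le}.

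Two minor points. First, the step you flag as the main obstacle is not one: the G\"artner--Ellis theorem is designed precisely for such triangular arrays, and its hypotheses here reduce to pointwise convergence of the scaled cumulant generating functions (a Riemann sum of a continuous integrand) to a limit that is finite and differentiable on all of $\RR$, which holds since $\tilde P$ is $C^2$ by Lemma \ref{s012}; no further uniformity is required. Second, at $E=\tilde E_\pm$ the number of contributing configurations for a \emph{fixed} $\delta>0$ is exponential with a rate that merely vanishes as $\delta\to0$ (flipping sites near the point where $\bar\rho=1/2$ costs little energy), not subexponential as you state; since the definition of $\tilde S$ takes $\delta\to0$ after $N\to\infty$, and the single extremal configuration supplies the matching lower bound, your conclusion $\tilde S(\tilde E_\pm)=0$ nevertheless stands.
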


\begin{proof}
Multiplying and dividing the indicator in \eqref{f13} by $2^{N-1}$, we
reduce the computation of the entropy to a large deviations problem
for independent Bernoulli random variables and we obtain that
\begin{equation*}
\tilde S (E) \;=\; \sup_{m\in \mc M} \big\{ \bb S(m) : I(m) + \bb S(m) =
E\big\} \;,
\end{equation*}
where $I$ stands for the large deviations rate function given by
\begin{equation*}
I(m)\;=\; \int_0^1 \Big\{ m(x) \log \frac{m(x)}{\bar\rho(x)} \, +\,
[1-m(x)] \log \frac{1-m(x)} {1-\bar\rho(x)} \, \Big\} \, dx \; .
\end{equation*}
One should compare this expression with the variational formula
\eqref{eq:varS} for the nonequilibrium entropy.

Repeating the arguments presented in the proof of \eqref{eq:S1}, we
deduce that
\begin{equation*}
\tilde S (E) \;=\; \bb S \Big(\frac{\bar\rho^\theta}{(1-\bar\rho)^\theta +
  \bar\rho^\theta}  \Big)  \;,
\end{equation*}
where $\theta$ is the unique solution of ${\tilde P}^{\prime}
(\theta) =E$. The rest of the proof is similar to the one of
\eqref{S2}.
\end{proof}

Let $E_0 ={P}^{\prime} (0)$, ${\tilde E}_0 ={\tilde
  P}^{\prime} (0)$. By Lemma \ref{s01} and \ref{s012},
\begin{equation*}
{E}_0 \;=\; \log \Big( \cfrac{1}{\beta - \alpha} 
\int_{\alpha}^{\beta} \cfrac{1}{\sqrt{x(1-x)}}\; dx \Big)\; , \quad 
{\tilde E}_0 \;=\; \cfrac{1}{\beta - \alpha} \int_{\alpha}^{\beta}  
\log \Big( \cfrac{1}{\sqrt{x(1-x)}} \Big)\; dx\; .
\end{equation*} 
By Jensen's inequality, $ {\tilde E}_- < E_{-}$, ${\tilde E}_+ <
E_{+}$ and ${\tilde E}_0 < E_0$. Since $\min \{x,1-x\} \le 1/2 \le
\max \{ x,1-x\}$ and $\sqrt{x(1-x)} \le 1/2$, we may compare all
variables with $\log 2$ to obtain in the end that ${\tilde E}_- <
E_{-} < \log 2 < {\tilde E}_0 < \min\{E_0, {\tilde E}_+\} \le \max
\{E_0, {\tilde E}_+\} < E_+$ in the case $\alpha <\beta$.

\medskip The nonequilibrium and the local equilibrium entropy differ.
For every $0<\alpha<\beta<1$, $S < \tilde S$ in the interval $(E_{-},
{\tilde E}_0)$ and $\tilde S < S$ in the interval $(E_0 , {\tilde
  E}_+)$.

Indeed, fix $E\in (E_{-} , {\tilde E}_0)$.  By Jensen's inequality,
$\theta^{-1} {\tilde P} (\theta) < {\theta}^{-1} P (\theta)$,
$\theta\in\bb R$.  Therefore, for every $\theta>0$, $\theta E -
P(\theta) < \theta E - \tilde P(\theta)$. On the other hand, since
$E<E_0$ and $\theta_{E_0} = 0$, $P'(\theta_E) = E < E_0 =
P'(\theta_{E_0}) = P'(0)$. Hence, since $P'$ decreases,
$\theta_E>0$. A similar argument shows that $\tilde \theta_E>0$.  In
conclusion, by the variational formula for the entropies presented in
\eqref{S2} and \ref{s07},
\begin{equation*}
\begin{split}
S(E) \;& =\; \inf_{\theta\in\bb R} \big\{ \theta E - P(\theta)
  \big\} \;=\; \inf_{\theta>0} \big\{ \theta E - P(\theta)
  \big\} \\
&<\; \inf_{\theta>0} \big\{ \theta E - \tilde P(\theta)
  \big\} \;=\; \inf_{\theta\in\bb R} \big\{ \theta E - \tilde P(\theta)
  \big\} \;=\; \tilde S(E)\; .  
\end{split}
\end{equation*}
A similar argument proves the other claim.\\

Consider the sequence of random variables $Y_N (\eta) = - N^{-1} \log
\mu_{\alpha,\beta}^N (\eta)$, ${\tilde Y}_N (\eta)= - N^{-1} \log
\nu_{\alpha,\beta}^N (\eta)$ defined on the probability space
$\Omega_N$ equipped with the probability measure
$\mu_{\alpha,\beta}^N$, ${\nu}_{\alpha,\beta}^N$, respectively. By
\eqref{l2}, the sequence $(Y_N: N\ge 1)$ satisfies a large deviations
principle with convex rate function $J (E)=E-S(E)$. By similar
reasons, the sequence $({\tilde Y}_N : N\ge 1)$ satisfies a large
deviations principle with convex rate function ${\tilde
  J}(E)=E-{\tilde S} (E)$.

Bahadoran \cite{B} proved that $E_{\mu_{\alpha,\beta}^N} [Y_N]$ and
$E_{\nu_{\alpha,\beta}^N} [{\tilde Y}_N]$ have the same limit given by
the Gibbs-Shannon entropy
\begin{equation*}
{\bar E}=-\int_0^1 s( {\bar \rho} (x) ) \,dx\;,
\end{equation*}
where $ {\bar \rho} (x)=\alpha + (\beta-\alpha)x $. This result can
be recovered from ours.

\begin{lemma}
  The nonnegative rate functions $J$ and ${\tilde J}$ vanish at the same
  and unique point
$$
{\bar E}= P'(1)={\tilde P}' (1)\;.
$$ 
In particular, $Y_N$ under $\mu_{\alpha,\beta}^N$, and ${\tilde Y}_N$
under $\nu_{\alpha,\beta}^N$ converge in probability to $\bar E$.
\end{lemma}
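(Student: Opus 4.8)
The plan is to reduce everything to the Legendre representations of the two entropy functions. Write $J(E) = E - S(E)$ and $\tilde J(E) = E - \tilde S(E)$. By \eqref{S2}, for $E$ in the open band $(E_-, E_+)$ we have $S(E) = \theta_E\, E - P(\theta_E)$, where $\theta_E$ is the unique solution of $P'(\theta) = E$ furnished by \eqref{f05}; hence $J(E) = (1-\theta_E)\,E + P(\theta_E)$. Since $P'(\theta_E) = E$, the standard Legendre identity gives $S'(E) = \theta_E$, and therefore $J'(E) = 1 - \theta_E$. By Lemma \ref{s01} the function $P$ is strictly concave, so $P'$ is strictly decreasing and $E \mapsto \theta_E$ is strictly decreasing; consequently $J'$ is strictly increasing and vanishes at exactly one point, namely where $\theta_E = 1$, that is $E = P'(1)$. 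Thus $J$ is strictly convex with a unique minimizer at $P'(1)$. The same computation applied to $\tilde P$ and Proposition \ref{s07} shows that $\tilde J$ is strictly convex with a unique minimizer at $\tilde P'(1)$.

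I would then check that both minima equal zero. Evaluating $J$ at its minimizer gives $J(P'(1)) = (1-1)\,P'(1) + P(1) = P(1)$, and from \eqref{f01} one has $A(1) = \int_\alpha^\beta dx = \beta - \alpha$, so $P(1) = \log[A(1)/(\beta-\alpha)] = 0$. The identical computation yields $\tilde P(1) = 0$, whence $\tilde J(\tilde P'(1)) = 0$. Since $J$ and $\tilde J$ are nonnegative, this identifies $P'(1)$ and $\tilde P'(1)$ as their unique zeros.

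It remains to identify both values with $\bar E = -\int_0^1 s(\bar\rho(x))\,dx$. For $\tilde P$ this is a direct computation: at $\theta = 1$ the denominator $x^\theta + (1-x)^\theta$ equals $1$, so $\tilde P'(1) = -(\beta-\alpha)^{-1}\int_\alpha^\beta s(x)\,dx$, and the change of variables $u = \bar\rho(x)$ turns this into $-\int_0^1 s(\bar\rho(x))\,dx = \bar E$. For $P$ it is cleanest to use the description of the optimal profile in \eqref{eq:mF}: at $\theta_E = 1$ the maximizer of \eqref{eq:varS} is $m = F$, and then \eqref{eq:bvp} forces $F'' = 0$, so $F = \bar\rho$ and $m = \bar\rho$. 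Since $V(\bar\rho) = 0$, the constraint $V(m) + \bb S(m) = E$ becomes $E = \bb S(\bar\rho) = \bar E$, that is $P'(1) = \bar E$; moreover $S(\bar E) = \bb S(\bar\rho) = \bar E$, which re-confirms $J(\bar E) = 0$. Along the way one records that $\theta_{\bar E} = 1 > 0 = \theta_{E_0}$ forces $\bar E < E_0 < E_+$, while $\bar E > E_-$ because $\theta = 1$ is finite, so $\bar E$ indeed lies in the open band and the Legendre formulas used above are legitimate.

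Finally, $Y_N$ under $\mu^N_{\alpha,\beta}$ and $\tilde Y_N$ under $\nu^N_{\alpha,\beta}$ satisfy the large deviations principles stated just before the lemma, with convex rate functions $J$ and $\tilde J$ each vanishing at the single point $\bar E$. Applying the large deviations upper bound to the closed sets $\{|E - \bar E| \ge \epsilon\}$, on which the rate function is bounded below by a strictly positive constant, yields exponential decay of the corresponding probabilities and hence convergence of $Y_N$ and $\tilde Y_N$ to $\bar E$ in probability. The only genuinely delicate step is the identification $P'(1) = \bar E$: the substitution argument for $\tilde P'(1)$ is immediate, but the explicit formula \eqref{f04} for $P'$ is awkward to evaluate at $\theta = 1$, which is why I would route that computation through the optimal profile $m = \bar\rho$ rather than differentiating $P$ directly.
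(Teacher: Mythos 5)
Your proposal is correct and follows essentially the same route as the paper: Legendre duality between $S$ and $P$, strict concavity of $P$ (Lemma \ref{s01}) giving strict convexity of $J$ with unique critical point at $\theta_E=1$, the evaluation $P(1)=0$ via $A(1)=\beta-\alpha$, and the identity $P'(1)=\tilde P'(1)=\bar E$. The only difference is that you supply a proof of the identification $P'(1)=\bar E$ (through the optimal profile $m=F=\bar\rho$ in \eqref{eq:mF}--\eqref{eq:bvp}), which the paper states without detail, and this is a legitimate and clean way to fill that gap.
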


\begin{proof}
By Lemma \ref{s01}, the variational formula \eqref{S2} and the
assertions following this formula, the nonnegative rate function $J$
is strictly convex on $[E_{-},E_+]$, differentiable in $(E_-, E_+)$,
and equal to $+\infty$ outside of the interval $[E_{-},E_+]$.  It has
therefore a unique minimum ${\bar E} \in [E_-,E_+]$. 

By \eqref{S2} and \eqref{f05}, $S'(E)=\theta_E$ on $(E_-, E_+)$, where
$\theta_E$ is the unique solution of $P'(\theta)=E$.  By Lemma
\ref{s01}, $\lim_{\theta \to \pm \infty} P'(\theta)=E_{\mp}$.  It
follows from the previous two facts that $\lim_{E \to E_{\pm}}
J'(E)=\pm \infty$.  Since $J$ is strictly convex, $J$ has a unique
minimizer ${\bar E}$ in $(E_{-}, E_{+})$ solution of $\theta_{\bar E}
= S' ({\bar E})=1$. Applying $P'$ on both sides of this equation, we
deduce that ${\bar E} =P' (1)$. 

We claim that $J(\bar E)=0$. To prove this identity we need to show
that $S (\bar E)=\bar E$ or, in view of \eqref{S2}, that $\theta_{\bar
  E} \bar E - P(\theta_{\bar E})= \bar E$. Since $\theta_{\bar E}=1$,
this equation is reduced to $P(1)=0$, which is easy to check in view
of the explicit formula \eqref{f01} for the nonequilibrium pressure.

The same argument applies to ${\tilde J}$ and the result follows from
the identity
\begin{equation*}
P'(1)\;=\; {\tilde P}'(1) \;=\;
-\int_0^1 s( \alpha + (\beta -\alpha) x) \, dx\;.
\end{equation*}
\end{proof}

In \cite[Section 7]{DLS}, the authors compute the limit of the
variance of the sequences $(Y_N:N\ge 1)$ and $(\tilde Y_N:N\ge 1)$ and
show that the limits differ. This result can be recovered from a
second order expansion of the entropy function $S_{\alpha, \beta}$.

We have seen that the rate function $J$ has a unique minimum at ${\bar
  E}$. It is well known from the theory of large deviations that the
asymptotic variance of the sequence $Y_N$ is given by $J'' (\bar
E)^{-1} = - S'' ({\bar E})^{-1}$. Since $\theta_{\bar E}=1$ and since
$S$ is the Legendre transform of the nonequilibrium pressure $P$, we
have that $S'' ({\bar E})=1/P'' (\theta_{\bar E})=1/P'' (1)$. Hence,
$-P'' (1)$ is the asymptotic variance of the sequence $Y_N$.

By taking $\theta=1$ in (\ref{eq:PP}) we obtain that 
\begin{equation}
\label{eq:bv}
\begin{split}
- P'' (1) \;& =\; \cfrac{1}{\beta -\alpha} \int_{\alpha}^{\beta} 
t(1-t) \Big[ \log \Big( \cfrac{t}{1-t} \Big)\Big]^2 dt \\
& -\cfrac{1}{2(\beta -\alpha)^2} \int_\alpha^\beta dx \, 
\int_{\alpha}^{\beta} \, dy \, (s(x)-s(y))^2 \;.
\end{split}
\end{equation}
A long and tedious computation shows that this expression coincides
with the limiting variance derived in \cite{DLS}.

A similar computation in the equilibrium model gives that the
asymptotic variance of the sequence $\tilde Y_N$ is equal to
\begin{equation*}
\cfrac{1}{\beta -\alpha} \int_{\alpha}^{\beta} 
t(1-t) \Big[ \log \Big( \cfrac{t}{1-t} \Big)\Big]^2 dt\;.
\end{equation*}
In particular, the asymptotic variance in the nonequilibrium model is
strictly bounded above by the asymptotic variance in the equilibrium
model.

% \begin{figure}
% \label{fig:SS}
% \begin{center}
% \includegraphics[width=13cm]{Final_entropies.png}
% \caption{Graph of $S_{\alpha,\beta}(E)$ and $\tilde S_{\alpha,\beta}
% (E)$ w.r.t. $E$ for $\alpha=1/5, \beta=5/6$} 
% \end{center}
% \end{figure}

\appendix
\section{The assumption ({\bf H})}
\label{sec:H}

The stationary state $\mu^N_{\alpha, \beta}$ of the symmetric simple
exclusion process with open
boundary conditions can be expressed in terms of a product of matrices
\cite{DLS}: There exists matrices $D$, $E$ and vectors $|V\rangle,
\langle W|$ such that
\begin{equation*}
\begin{split}
& DE \;-\; ED \;=\;  D\;+\; E \; , \qquad
\{(1-\beta)D -\beta E\} \,|\, V \rangle \;=\; |\, V \rangle\;, \\
&\quad \langle W \,|\, \{\alpha E- (1-\alpha)D\} \;=\; \langle W \, |
\end{split}
\end{equation*}
and 
\begin{equation}
\label{a01}
\mu_{\alpha,\beta}^{N}(\eta) \;=\; \cfrac{\omega_N (\eta)}
{\langle W | (D+E)^{N-1} |V \rangle}\;,
\end{equation}
where the weight $\omega_N (\eta)$ is given by
\begin{equation*}
\omega_N (\eta) \;=\; \langle W | \prod_{x=1}^{N-1} 
\left\{ \eta(x) D + [1-\eta(x)] E\right\} | V \rangle\;.
\end{equation*}
The partition function $\langle W | (D+E)^{N-1} | V \rangle$ can be
computed explicitly \cite[(3.11)]{DLS}:
\begin{equation}
\label{a03}
\langle W | (D+E)^{N-1} | V \rangle \;=\;
\frac{(N+1)!}{2(\beta-\alpha)^N}\; \cdot
\end{equation}

Decompose the set $\{1, \ldots, N-1\}$ into $r=\ve^{-1}$ adjacent
intervals $K_1, \ldots,K_r$ of size $\ve N$ and denote by $\bM=(M_1,
\ldots, M_r)$ the number of particles in each box. We recall that
$\mu^N_{\alpha, \beta}(\,\cdot\, | \bM)$ is the probability measure
$\mu^N_{\alpha, \beta}$ conditioned to have $M_j$ particles in $K_j$,
$j=1, \ldots,r$.

Let $\eta^{+}$, $\eta^-$ be the configuration in $\Omega_N (\bM)$ with
all particles in each interval $K_j$ at the left most, right most
positions, respectively. Hence, if $K_j= \{x_1, \dots, x_L\}$,
$M_j=M$, $\eta^+(x) =1$ if and only if $x_1\le x\le x_M$, $\eta^-(x)=1$ 
if and only if
$x_{L-M+1} \le x\le x_L$.

\begin{lemma}
\label{s09}
For $0< \alpha\le \beta <1$, $\eta\in \Omega_N (\bM)$,
\begin{equation*}
\mu(\eta^- | \bM) \;\le\; \mu (\eta | \bM) \;\le\; \mu (\eta^+ |
\bM)\;. 
\end{equation*} 
\end{lemma}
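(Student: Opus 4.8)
The plan is to reduce the statement to a monotonicity property of the unnormalized weights and then prove that property from the defining algebra of the matrices. For $\eta\in\Omega_N(\bM)$ the conditional measure satisfies $\mu(\eta|\bM)=\omega_N(\eta)\big/\sum_{\eta'\in\Omega_N(\bM)}\omega_N(\eta')$, and the normalization does not depend on $\eta$; hence it suffices to show
$$
\omega_N(\eta^-)\;\le\;\omega_N(\eta)\;\le\;\omega_N(\eta^+)
\qquad\text{for every }\eta\in\Omega_N(\bM).
$$
When $\alpha=\beta$ the stationary state is Bernoulli and $\mu(\cdot\,|\bM)$ is uniform on $\Omega_N(\bM)$, so the inequalities are equalities; I therefore assume $\alpha<\beta$, which is what the matrix ansatz requires.

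The central step I would isolate is that moving one particle one step to the left inside a box does not decrease the weight. Suppose $\eta(x)=0$ and $\eta(x+1)=1$ with $x,x+1$ in the same box, and let $\eta'$ be obtained by putting the particle at $x$ instead of $x+1$ (so $\eta'(x)=1$, $\eta'(x+1)=0$, and $\eta'=\eta$ elsewhere). Writing $A$ and $B$ for the products of the site matrices to the left of $x$ and to the right of $x+1$, we have $\omega_N(\eta)=\langle W|A\,ED\,B|V\rangle$ and $\omega_N(\eta')=\langle W|A\,DE\,B|V\rangle$. The relation $DE=ED+D+E$ then gives
$$
\omega_N(\eta')\;=\;\omega_N(\eta)\;+\;\langle W|A\,D\,B|V\rangle\;+\;\langle W|A\,E\,B|V\rangle .
$$
Each correction term is a matrix product $\langle W|(\text{word in }D,E)|V\rangle$ of length $N-2$, so it equals the weight of a configuration of the exclusion system of size $N-1$; since every such weight is a nonnegative multiple of a probability and the partition function in \eqref{a03} is strictly positive for all sizes, both terms are $\ge 0$. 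Hence $\omega_N(\eta')\ge\omega_N(\eta)$.

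With this local inequality in hand, the two global bounds follow by a sorting argument. Inside each box the occupation string of an arbitrary $\eta\in\Omega_N(\bM)$ can be rearranged into all particles at the leftmost sites by a finite sequence of adjacent transpositions, each of which is precisely a left-move of the above type (the bubble sort that brings the occupied sites to the front). Every transposition stays within its box, so the counts $\bM$ are preserved, and each one does not decrease $\omega_N$; the endpoint of the procedure is $\eta^+$, whence $\omega_N(\eta)\le\omega_N(\eta^+)$. Reading the same computation backwards, using $DE\mapsto ED$, which subtracts the nonnegative correction, shows that a right-move does not increase the weight, and sorting the particles to the rightmost sites in each box yields $\omega_N(\eta)\ge\omega_N(\eta^-)$. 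Dividing by the common normalization gives the lemma.

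The only point needing genuine care is the nonnegativity of the reduced expressions $\langle W|A\,D\,B|V\rangle$ and $\langle W|A\,E\,B|V\rangle$. Rather than invoking an explicit nonnegative representation of $D$, $E$, $\langle W|$ and $|V\rangle$, I would settle it structurally by recognizing these expressions as weights of the smaller exclusion system, for which positivity is already guaranteed through \eqref{a03}. Everything else is the one commutation identity together with a standard rearrangement.
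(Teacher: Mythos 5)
Your proof is correct and follows essentially the same route as the paper: both rest on the single computation that exchanging a particle with an adjacent hole changes $\omega_N$ by $\pm\bigl(\langle W|A\,D\,B|V\rangle+\langle W|A\,E\,B|V\rangle\bigr)$ via $DE-ED=D+E$, identify these correction terms as weights $\omega_{N-1}(\xi)$, $\omega_{N-1}(\zeta)$ of the size-$(N-1)$ system (hence nonnegative), and then sort particles within each box. Your explicit treatment of the degenerate case $\alpha=\beta$ and of the positivity of the reduced weights via \eqref{a03} is slightly more careful than the paper's, which leaves both points implicit.
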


\begin{proof}
This is a simple consequence of the matrix product form of the
stationary state. Let $\eta$ be any configuration in $\Omega_N$ and
let $1\le x\le N-1$ be any site such that $\eta (x)=1, \eta
(x+1)=0$. Then, $\omega_N (\sigma^{x,x+1}\eta) - \omega_N (\eta)$ is
equal to 
\begin{equation*}
\begin{split}
& \langle W | \prod_{y=1}^{x-1} \left\{ \eta(y) D + [1-\eta(y)] E\right\} 
\left[ ED - DE \right] \prod_{y=x+2}^{N-1} \left\{ \eta(y) D + [1-\eta(y)]
  E\right\} | V \rangle\\
&\quad =\; - \, \langle W | \prod_{y=1}^{x-1} \left\{ \eta(y) D + 
[1-\eta(y)] E\right\} \left[ D+E\right] \prod_{y=x+2}^{N-1} 
\left\{ \eta(y) D + [1-\eta(y)] E\right\} | V \rangle \\
&\quad =\; - \; \omega_{N-1} (\xi) \;-\; 
\omega_{N-1} (\zeta)  \; \le \; 0\;,  
\end{split}
\end{equation*}
where $\xi$, $\zeta$ are the configuration of $\Omega_{N-1}$ given by
$\xi = (\eta(1), \ldots, \eta(x-1), 1,\eta(x+2),\ldots, \eta(N-1))$,
$\zeta = (\eta(1), \ldots, \eta(x-1), 0,\eta(x+2),\ldots, \eta(N-1))$.
\end{proof}

Hence, the derivation of the assumption ({\bf H}) is reduced to the
proof that
\begin{equation*}
\lim_{\ve \to 0} \limsup_{N \to \infty} \sup_{\bM}  \cfrac{1}{N} \,
\Big\vert \log \mu^N_{\alpha, \beta} (\eta^+ | \bM) - 
\log \mu^N_{\alpha, \beta} (\eta^- | \bM)\Big\vert
\;=\; 0\;. 
\end{equation*}
or, equivalently, to the proof that
\begin{equation*}
\lim_{\ve \to 0} \limsup_{N \to \infty} \sup_{\bM}  \cfrac{1}{N} \,
\Big\vert \log \mu^N_{\alpha, \beta} (\eta^+) - 
\log \mu^N_{\alpha, \beta} (\eta^-)\Big\vert
\;=\; 0\;. 
\end{equation*}

For each fixed $\ve >0$ and $\bM$, the configurations $\eta^+$,
$\eta^-$ are associated to density profiles $\rho^+$,
$\rho^-:[0,1]\to[0,1]$ defined by $\rho^+ = \sum_{1\le i \le r}
1\{[(i-1)\varepsilon , (i-1)\varepsilon + \rho_i)\}$, $\rho^- =
\sum_{1\le i \le r} 1\{[i\varepsilon - \rho_i , i\varepsilon )\}$,
where $\rho_i = M_i/N$, $1\le i\le r$ and $1\{A\}$ stands for the
indicator of the set $A$. Therefore, by the large deviation principle
for the density profiles under the stationary state $\mu^N_{\alpha,
  \beta}$ \cite{DLS, BDGJL2, bg,f},
\begin{equation*}
\lim_{N \to \infty} \cfrac{1}{N} \, \log \mu^N_{\alpha, \beta}
(\eta^\pm) \;=\; - V_{\alpha, \beta}(\rho^\pm)\; ,
\end{equation*}
where $V_{\alpha, \beta}$ is the functional introduced just before
\eqref{eq:bvp}. 

Hence, to prove assumption ({\bf H}), it remains to show that
\begin{equation*}
\lim_{\ve \to 0} \sup_{\rho}  \,
\Big\vert  V_{\alpha, \beta} (\rho^+) -  V_{\alpha, \beta} (\rho^-)\Big\vert
\;=\; 0\;. 
\end{equation*}
where the supremum is carried over all $0\le \rho_i\le \varepsilon$,
$1\le i\le r$. Since $\rho^\pm(x)$ is either $0$ or $1$, we may
replace in the previous formula, $V_{\alpha, \beta}$ by $V_{\alpha,
  \beta} + \bb S$. By \cite[Theorem 4.1]{BSGJL}, this functional is
continuous in $\mc M$. For each $\varepsilon>0$, denote by $\rho^{\pm,
  \varepsilon}$ the profiles which attain the previous supremum with
$V_{\alpha, \beta}$ by $V_{\alpha, \beta} + \bb S$. By compactness of
$\mc M$, there exists a subsequence $\varepsilon_k \downarrow 0$ for
which
\begin{equation*}
\begin{split}
& \lim_{\ve \to 0} \sup_{\rho}  \, \Big\vert  V_{\alpha, \beta} (\rho^+) 
-  V_{\alpha, \beta} (\rho^-)\Big\vert \\
& \quad =\; \lim_{k \to \infty}   \, \Big\vert  V_{\alpha, \beta} 
(\rho^{+, \varepsilon_k}) + \bb S (\rho^{+, \varepsilon_k})
-  V_{\alpha, \beta} (\rho^{-, \varepsilon_k}) - \bb S (\rho^{-,
  \varepsilon_k}) \Big\vert   
\end{split}
\end{equation*}
and $\rho^{+, \varepsilon_k}$ converges weakly to some profile
$\rho$. Clearly, the sequence $\rho^{-, \varepsilon_k}$ converges
weakly to the same profile $\rho$. Since $V_{\alpha, \beta} + \bb S$
is continuous in $\mc M$, assumption ({\bf H}) is proved.

\medskip
\noindent{\bf Acknowledgments.} The authors thank B. Derrida for
indicating to them Lemma \ref{s09}, L. Bertini, D. Gabrielli, G.
Jona-Lasinio and J. L. Lebowitz for fruitful discussions and the
referees for their remarks. The end of Section 7, in particular, was
added by suggestion of one of the referees. The warm hospitality of
IMPA is also acknowledged by the first author.


\begin{thebibliography}{10}

\bibitem{B} Bahadoran, C., On the convergence of entropy for
  stationary exclusion processes with open boundaries, J. Stat. Phys.
  {\bf 126} (2007), no. 4-5, 1069--1082.

\bibitem{BDGJL2} Bertini L., De Sole A., Gabrielli D., Jona-Lasinio
  G., Landim C., \emph{Macroscopic fluctuation theory for stationary
    non equilibrium state.}  J.\ Statist.\ Phys.\ {\bf 107}, 635--675
  (2002).

\bibitem{BSGJL} L. Bertini, A. De Sole, D. Gabrielli, G. Jona-Lasinio,
  C. Landim, Large deviations for the boundary driven symmetric simple
  exclusion process, Math. Phys. Anal. Geom. {\bf 6} (2003), no. 3,
  231--267.

\bibitem{BDGJL9} Bertini L., De Sole A., Gabrielli D., Jona-Lasinio
  G., Landim C., \emph{Stochastic interacting particle systems out of
    equilibrium.}  J. Stat. Mech. P07014. (2007)

\bibitem{BSGJL2} Bertini, L.; De Sole, A.; Gabrielli, D.;
  Jona-Lasinio, G.; Landim, C. Towards a nonequilibrium
  thermodynamics: a self-contained macroscopic description of driven
  diffusive systems. J. Stat. Phys. {\bf 135} (2009), no. 5-6,
  857--872.

\bibitem{bdgjl13} Bertini, L.; De Sole, A.; Gabrielli, D.;
  Jona-Lasinio, G.; Landim, C. Action functional and quasi-potential
  for the Burgers equation in a bounded interval. arXiv:1004.2225
  (2010).

\bibitem{bg} Bodineau T., Giacomin G., \emph{From dynamic to static
    large deviations in boundary driven exclusion particles systems.}
  Stoch. Proc. Appl. {\bf 110}, 67--81 (2004).

\bibitem{D} Derrida, B., Non-equilibrium steady states: fluctuations
  and large deviations of the density and of the current,
  J. Stat. Mech. P07023 (2007)

\bibitem{DLS} Derrida,B., Lebowitz, J.L., Speer, E.R., Large Deviation
  of the Density Profile in the Steady State of the Open Symmetric
  Simple Exclusion, J. Statist. Phys. {\bf 107} (2002), no. 3-4,
  599--634.

\bibitem{DLS2} Derrida, B.; Lebowitz, J. L.; Speer, E. R.; Entropy of
  open lattice systems, J. Stat. Phys. {\bf 126} (2007), no. 4-5,
  1083--1108.

\bibitem{f} Farfan J.; \emph{Static large deviations of boundary
    driven exclusion processes.}  Preprint 2009,
  \texttt{arXiv:0908.1798v1}

\bibitem{P} Presutti, E., Scaling limits in statistical mechanics and
  microstructures in continuum mechanics. Theoretical and Mathematical
  Physics. Springer, Berlin, 2009. xvi+467 pp. ISBN: 978-3-540-73304-1

\bibitem{S} H. Spohn, {\sl Long range correlations for stochastic
    lattice gases in a nonequilibrium steady state.}\/ J. Phys. A {\bf
    16} (1983), 4275--4291.

\end{thebibliography}
\end{document}